\documentclass[10pt, conference, letterpaper]{IEEEtran}
\IEEEoverridecommandlockouts
 \usepackage[colorlinks,
             linkcolor=blue,
             anchorcolor=blue,
             citecolor=blue,
             bookmarks=false]{hyperref}
\hypersetup{bookmarks=false}
\usepackage{textcomp}
\usepackage{xcolor}
\usepackage{bbm}
\usepackage{listings}
\usepackage{cite}
\usepackage{url}
\usepackage{xcolor}
\usepackage{graphicx}
\usepackage{epstopdf}
\usepackage[cmex10]{amsmath}
\usepackage{cases}
\usepackage{amsfonts}
\usepackage{amssymb}
\usepackage{amsthm}
\usepackage{mathrsfs}
\usepackage{verbatim}
\usepackage{algorithm}
\usepackage{algpseudocode}
\usepackage{setspace}
\usepackage{bm}%
\usepackage{stfloats}
\usepackage{enumerate}

\theoremstyle{plain}
\newtheorem{theorem}{Theorem}
\newtheorem{remark}{Remark}
\newtheorem{lemma}[]{Lemma}
\newtheorem{corollary}{Corollary}

\usepackage{subfig}
\usepackage{multirow} 
\usepackage{dsfont}
\begin{document}
% \title{Stay Fresh by Probing: AoI-Optimized Random Access with Energy Harvesting}
% \title{Probe-then-Send: Enhancing Information Freshness in Energy-Harvesting Random Access Networks}
%\title{Probe to Stay Fresh: Enhancing Age of Information in Energy-Harvesting Random Access Networks}
\title{Probing for Better Age of Information in Energy-Harvesting Random Access Networks}

\author{
  \IEEEauthorblockN{Ziyi~Li, 
         Fangming~Zhao, 
         and Howard H.~Yang}
  \IEEEauthorblockA{%
                    Zhejiang University/University of Illinois Institute, Zhejiang University, Haining, China\\
    ziyi2.25@intl.zju.edu.cn, 
    fangming.23@intl.zju.edu.cn,
    haoyang@intl.zju.edu.cn}
}
% \title{Channel Probing for Fresh Information in Energy-Harvesting Random Access Networks}

% \title{Probing for Better Age of Information in Energy-Harvesting Random Access Networks}

%\author{Ziyi~Li, Fangming~Zhao, and Howard H.~Yang}

\maketitle

\begin{abstract}
In this paper, we investigate the impact of channel probing and reservation on the Age of Information (AoI) in energy-harvesting (EH) random access networks, where each source relies solely on harvested energy for status updating. 
To mitigate collisions, each node may expend a small amount of energy to send a probing signal before transmission, and a successful probe reserves the channel in the current slot. 
If probing fails, the node can either remain silent, termed strict avoid free competition (SAFC), attempt data transmission with a certain probability, termed reserved nodes competition (RUC), or adopt all-active nodes competition (AUC), where all energy-sufficient nodes may contend regardless of whether they probed.
We derive closed-form expressions for the network-average AoI under these three schemes and validate them via simulations. The results show that AUC consistently achieves the lowest AoI by shortening the waiting time to convert harvested energy into successful updates. This finding challenges the conventional wisdom that strict collision avoidance is always optimal in energy-constrained systems, since allowing additional contention can effectively amortize probing overhead across more transmission opportunities. Comparisons with EH-enabled slotted ALOHA further show that probing-based access significantly outperforms direct transmission in energy-constrained regimes, highlighting channel probing as an effective approach to improving freshness. 
\end{abstract}

\section{Introduction}
Equipping transmitters in random access networks with energy harvesting (EH) capabilities provides a natural pathway toward sustainable Internet of Things (IoT) systems \cite{famitafreshi2021comprehensive}. 
In such systems, timeliness is as important as reliability, making the Age of Information (AoI) \cite{yangUnderstanding} a key performance metric. 
Unlike battery-powered networks, EH-enabled nodes suffer from stochastic energy arrivals, resulting in unavoidable transmission interruptions for energy accumulation. 
Such intermittency fundamentally limits update regularity and degrades AoI performance \cite{TITAhmedAoIEH}. 

The AoI concept was originally introduced to quantify data freshness in real-time status update systems, with early EH studies mainly focusing on point-to-point policies \cite{RoyISIT2015,2018ISIT,ISIT2018finiteEH,WuXianwenTGCN}. 
As research extended to multi-source networks, the impact of inter-link interference and battery-aware access adaptation on AoI has been highlighted in \cite{ChenNikosINFOCOM,EHAoI:TCOM2025,zhao2025age}. 
Several works further optimized slotted ALOHA (SA) from an AoI perspective by balancing transmission success probability and update frequency \cite{sakakibara2018modeling,EHAoI:TCOM2025,zhao2025age,xiao2024information}. 
Nevertheless, SA still suffers from collisions, which are particularly harmful in EH systems because failed transmissions waste scarce harvested energy and prolong recharge periods.

To mitigate collision costs, probing-based mechanisms have been widely adopted. 
For example, \cite{wang2022age} showed that probing and channel reservation can outperform ALOHA in terms of AoI under high traffic loads, while related MAC designs improve access stability via reservation and backlog control \cite{crowther1973system,thomopoulos2002simple,chung1992diversity,alsbou2015analysis,vazquez2015reservation}. Motivated by these insights, we integrate probing into channel access, where reservation failure can be handled from aggressive fallback contention to strict silence, enabling us to quantify the trade-off among collision avoidance, probing overhead, and recharge delay. 
This perspective is particularly relevant in EH systems, where fewer collisions do not necessarily imply lower AoI.

The contributions of this paper are outlined as follows:
\begin{itemize}
    \item We develop a theoretical framework that accounts for the interplay among channel probing and reservation, energy harvesting and consumption, and interference from concurrent transmissions in the network AoI analysis. 
    \item We propose three channel access mechanisms, termed all-active nodes competition (AUC), reserved nodes competition (RUC), and strict avoid free competition (SAFC), ordered from the most aggressive to most conservative. 
    \item We validate the accuracy of the analysis through simulations, where the results demonstrate that the AUC strategy achieves the optimal network AoI performance and that reservation-based access is advantageous in energy-constrained regimes.
\end{itemize}

\section{System model}

\subsection{Network Configuration}
We consider a random-access network comprising $n$ source nodes and a common access point (AP), in which the source nodes can harvest energy from the ambient environment and use it to update status information toward the AP over a shared spectrum. 
% We segment the time into equal-length intervals indexed by $t \in \mathbb{N}$, referred to as communication rounds. 
Time is divided into communication rounds indexed by $t \in \mathbb{N}$.
% We adopt the collision model \cite{CollisionModel} to characterize data transmission, namely, in each communication round, transmissions fail if more than one node accesses the spectrum. 
Under the collision model \cite{xiao2024information}, transmissions fail if more than one node accesses the spectrum.

We model the energy arrivals at each source node as an independent Bernoulli process with rate $\xi$, i.e., a typical source node harvests a single unit of energy with probability $\xi$ at each communication round.
The source node stores each incoming energy unit into an energy buffer. 
To facilitate analysis, we assume this buffer has infinite capacity.
In this network, we assume that each source node requires $M$ units of energy to transmit a single packet.
To avoid collisions among transmitters, we introduce a reservation-based mechanism prior to each transmission attempt.
Specifically, we assume that each node uses one energy unit to send a probing signal, and that transmitting a status update packet consumes $M$ energy units. 
Therefore, a source node remains silent until its energy buffer accumulates to at least $M+1$ units of energy.

% Consequently, we further structure each communication round into a probing phase followed by a data transmission phase. 
Each round consists of a probing phase followed by data transmission.
Regarding the physical time scales, we denote the duration of the data transmission phase as $L$, which is normalized to $1$ for analytical convenience. 
The duration of each probing phase is $\delta L$, where $\delta$ represents the probing-to-data slot ratio. 
As such, the physical duration of a single communication round consisting of one mini-slot is given by $T_{\mathrm{round}} = (1 + \delta)L$. 

\subsection{Transmission Protocols}
At a typical source node, when the harvested energy is sufficient for a single status-update operation (comprising channel probing and data transmission), it will initiate a transmission attempt. 
% Specifically, the source node sends a probing signal with probability $q$, consuming one unit of energy. 
Each node probes with probability $q$, consuming one energy unit.
The probing attempt succeeds when the AP receives only one node signal, after which the radio channel is dedicated to the node's data transmission in the subsequent mini-slot; otherwise, the probing fails. 

Upon receiving a failed channel reservation, the source node can choose to stay in silent until the confirmation of the next probing attempt, or directly access the channel, at the risk of incurring transmission collisions.
In this paper, we investigate the following three mechanisms, if a source node \textit{fails} the channel reservation: 
\begin{itemize}
    \item \textit{All-active Nodes Competition (AUC)}: All the nodes with harvested energy sufficient to support (at least) one transmission can update in the data transmission phase, independently with probability $\eta$, even if the node has not performed channel probing.
    \item \textit{Reserved Nodes Competition (RUC)}: Only the nodes that have performed channel probing can update in the data transmission phase, independently with probability $\eta$. 
    \item \textit{Strict Avoid Free Competition (SAFC)}: All the nodes are prohibited from data transmission if the channel reservation fails. Consequently, data transmission occurs if and only if the channel is successfully reserved by a certain source node during the probing phase.
\end{itemize}

\subsection{Performance Metric}
Let $\Delta_j(t)$ denote the instantaneous AoI of source node $j$ at the beginning of round $t$. Formally, the AoI evolves as 
\begin{equation}
    \Delta_j(t)=t-G_j(t),
\end{equation}
where $G_j(t)$ is the generation time of the latest packet received over this link at time $t$. 
In this paper, we focus on the network-average AoI $\bar{\Delta}$ \cite{yangUnderstanding}, defined as
\begin{equation}
    \bar{\Delta}=\frac{1}{n}\sum_{j=1}^{n}\lim_{T\to \infty} \frac{1}{T}  \sum_{t=1}^{T}\Delta_j(t).
\end{equation}

% \begin{figure*}
%     \centering
%     \includegraphics[width=0.8\linewidth]{figure/Markov_new.drawio.pdf}
%     \caption{A discrete-time Markov chain to characterize the energy harvesting and consumption process.}
%     \label{fig:markov}
%     %\vspace{-0.4cm}
% \end{figure*}

% ======================= %
%       Analysis
% ======================= %

\section{Analysis}

\subsection{Unified Analysis}
We develop a unified analytical framework for studying the network-average AoI across the three mechanisms. Specifically, we denote by $T$ and $p_\mathrm{s}$ the interval between any two consecutive status updates and the transmission success probability, respectively. 
Then, following \cite{zhao2025age}, the network-average AoI is given by  
\begin{equation} \label{eq:aoi}
    \begin{split}
        \bar{\Delta} 
        = \frac{\mathbb{E}[T^2]}{2\mathbb{E}[T]} + \left(\frac{1}{p_\mathrm{s}}-1\right)\mathbb{E}[T] + \frac{1}{2}.
    \end{split}
\end{equation} 
Thus, the analysis reduces to $p_\mathrm{s}$, $\mathbb{E}[T]$, and $\mathbb{E}[T^2]$.

Let $p_\mathrm{a}$ denote the probability that a node has at least $M+1$ energy units. Moreover, let $p_{\mathrm{c},\mathrm{s}}$ be the transmission success probability in the channel contention phase and let $p_{\mathrm{ac}}(s)$ be the data transmission mini-slot access probability (given a channel reservation failure), where $s \in \{ \mathrm{AUC}, \mathrm{RUC}, \mathrm{SAFC}\}$.
According to the mechanism principles, we have 
\begin{equation} \label{eq:p_access_cases}
   p_{\mathrm{ac}}(s) =
       \begin{cases}
           \eta, & s = \mathrm{AUC}, \\
           q\eta, & s = \mathrm{RUC}, \\
           0, & s = \mathrm{SAFC}.
       \end{cases}
\end{equation}
As such, the transmission success probability of a typical source node is given by the following. 

\begin{lemma}\label{lemma:ps}
The transmission success probability is 
 \begin{equation} \label{eq:p_s_unified}
        \begin{split}
            p_\mathrm{s} \!=\!\frac{q (1 - p_\mathrm{a} q)^{n-1}\!+\! p_{\mathrm{c},\mathrm{s}} \left( 1 - n q p_\mathrm{a} ( 1 - p_\mathrm{a} q)^{n-1} \right)}{p_\mathrm{T}},
        \end{split}
    \end{equation}
 where $p_\mathrm{T}$ represents the total transmission attempt probability, defined as
 \begin{equation} \label{eq:p_T}
     p_\mathrm{T} = \!q (1\!-\!p_\mathrm{a} q)^{n-1} \!+\! p_{\mathrm{ac}}(s)\left(1\!-\! n q p_\mathrm{a} ( 1 - p_\mathrm{a} q)^{n-1} \right),
 \end{equation}
 and 
 $p_{\mathrm{c},\mathrm{s}}$ can be expressed as
    \begin{equation}
p_{\mathrm{c},\mathrm{s}}=p_{\mathrm{ac}}(s)(1-p_\mathrm{a} p_{\mathrm{ac}}(s))^{n-1}.
\end{equation}
\end{lemma}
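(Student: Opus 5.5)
We obtain $p_\mathrm{s}$ as a ratio: the per-round probability that a typical active node (one holding at least $M+1$ energy units) delivers an update successfully, divided by the per-round probability $p_\mathrm{T}$ that it makes a data-transmission attempt at all. We work under the standard mean-field (decoupling) assumption used in \cite{zhao2025age}. Under it, each of the other $n-1$ nodes is active independently with probability $p_\mathrm{a}$, and given activity it acts independently of the typical node.

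We condition on the typical node being active and split the round into two mutually exclusive events.

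\emph{(i) Successful reservation.} The typical node probes, with probability $q$, and no other node probes. Each other node probes with probability $p_\mathrm{a}q$, so this event has probability $q(1-p_\mathrm{a}q)^{n-1}$. The reservation dedicates the data slot, so such an attempt always succeeds.

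\emph{(ii) Failed reservation round.} No node reserves the channel. We take its probability to be $1-nqp_\mathrm{a}(1-p_\mathrm{a}q)^{n-1}$, the complement of ``exactly one of the $n$ nodes probes,'' with each node treated symmetrically as probing with probability $p_\mathrm{a}q$. In such a round the typical node enters contention with probability $p_{\mathrm{ac}}(s)$, given by \eqref{eq:p_access_cases}. Its contention attempt succeeds iff none of the other $n-1$ nodes also accesses, each of which does so with probability $p_\mathrm{a}p_{\mathrm{ac}}(s)$. This gives a per-round contention success probability $p_{\mathrm{c},\mathrm{s}}=p_{\mathrm{ac}}(s)(1-p_\mathrm{a}p_{\mathrm{ac}}(s))^{n-1}$.

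Summing the attempt probabilities of (i) and (ii) gives $p_\mathrm{T}$ in \eqref{eq:p_T}. Summing the success probabilities gives the numerator of \eqref{eq:p_s_unified}. Their ratio is the success probability per attempt, which is exactly the $p_\mathrm{s}$ that enters \eqref{eq:aoi}: there, $\mathbb{E}[T]$ counts inter-attempt intervals, and the number of attempts until a success is geometric with parameter $p_\mathrm{s}$.

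The main obstacle is justifying the approximations that make these product forms valid. Two points need care.

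\emph{Symmetric treatment of the failure event.} The failure event is expressed as $1-nqp_\mathrm{a}(1-p_\mathrm{a}q)^{n-1}$, counting all $n$ nodes including the typical one with the factor $p_\mathrm{a}$. This is a mean-field symmetrization rather than an exact conditional probability. I would justify it by averaging over the typical node's own activity, or simply adopt it as the modeling convention.

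\emph{Independence of contention from the reservation outcome.} In (ii), contention among the other nodes is treated as independent of the reservation outcome. For RUC, where only probers contend, one should check that $p_\mathrm{a}q\eta$ is the unconditional access probability of another node. That value is consistent with the chosen decoupled approximation, which ignores the conditioning on the reservation having failed.

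I would state these approximations explicitly and then verify the three special cases of \eqref{eq:p_access_cases}. For SAFC, $p_{\mathrm{ac}}=0$ gives $p_\mathrm{s}=1$, which serves as a sanity check.
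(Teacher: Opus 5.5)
Your proposal is correct at the paper's level of rigor and follows essentially the same route as the paper's proof: write $p_\mathrm{s}$ as the typical node's reservation success plus the network-wide reservation failure probability $1-nqp_\mathrm{a}(1-p_\mathrm{a}q)^{n-1}$ times $p_{\mathrm{c},\mathrm{s}}$, all divided by $p_\mathrm{T}$. Your version is in fact slightly more careful. Conditioning on the typical node being active explains why the first numerator term is $q(1-p_\mathrm{a}q)^{n-1}$ rather than the paper's per-node $qp_\mathrm{a}(1-p_\mathrm{a}q)^{n-1}$. You also correctly flag that the failure term and the factorized contention term are mean-field conventions rather than exact conditional probabilities; strictly, that caveat also covers the typical node's own access decision given failure, e.g.\ under RUC, not only the other nodes'.
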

\begin{IEEEproof}
Please see Appendix~\ref{app:proof_A}.
\end{IEEEproof}

Next, we derive the node active probability $p_\mathrm{a}$ of a typical source node, which is related to its EH state. 
We model the evolution of the harvested energy as a discrete-time Markov chain, defined over a countably infinite state space $\mathcal{S} = \{0, 1, 2, \dots\}$.
Let $m$ represent the energy level of a typical source node at the onset of a time slot.
The transition probabilities of the Markov chain are determined by the interplay between the EH process with rate $\xi$ and the specific energy-depletion procedures (which depend on the transmission protocol adopted).

We further introduce two indexing sets: $\sigma \in \{s, a\}$ that indicates the operation regimes, where $s$ stands for silent and $a$ denotes active, and $y \in \{i, h, r, e, u, d\}$, which represents the specific outcome of the energy change.
Subsequently, we denote the transition probability by $P_{\sigma, y}$ and categorize the state transitions by the following:
\begin{itemize}
    \item Silent Regime ($0 \le m \le M$): The energy storage is insufficient to support a complete status update. Thus, the node remains silent, and the state transitions are governed solely by the EH process. Two transitions are possible:
    \begin{enumerate}
        \item $m \rightarrow m+1$: This transition occurs if an energy unit arrives. It corresponds to the \textit{Harvest} outcome $h$ which happens with probability $P_{s,h} = \xi$.
        \item $m \rightarrow m$: This transition takes place if no energy unit arrives. It corresponds to the \textit{Ideal} outcome $k$ which happens with probability $P_{s,i} = 1-\xi$.
    \end{enumerate}

    \item Active Regime ($m \ge M+1$): The source node has accumulated sufficient energy for status update. 
    Notably, the state transitions depend on the \textit{access mechanism} and the energy arrival patterns. In total, six transitions are possible:
    \begin{enumerate}
        \item $m \rightarrow m+1$: The node successfully accumulates one additional energy unit. This corresponds to the \textit{Harvest} outcome $h$ with probability $P_{a,h}$. % with $\Lambda_h=+1$
        \item $m \rightarrow m$: The node energy state remains unchanged, with a net energy change of zero. This corresponds to the \textit{Ideal} outcome $i$ with probability $P_{a,i}$.   %$\Lambda_k=0$
        
        \item $m \rightarrow m-1$: The node expends energy only on probing without an update. This corresponds to the \textit{Reservation} outcome $r$ with probability $P_{a,r}$.  %$\Lambda_r=-1$
        \item $m \rightarrow m-(M-1)$: A successful transmission coincides with an energy arrival that partially offsets the cost. This corresponds to the \textit{Economical Update} outcome $e$ with probability $P_{a,e}$.
        \item $m \rightarrow m-M$: A standard successful transmission occurs. This corresponds to the \textit{Standard Update} outcome $u$ with probability $P_{a,u}$.
        \item $m \rightarrow m-(M+1)$: A successful transmission occurs, and the probing cost is fully incurred without energy replenishment. This corresponds to the \textit{Deep Update} outcome $d$ with probability $P_{a,d}$.
    \end{enumerate}
\end{itemize}

% Depending on whether energy harvesting exceeds consumption, the system operates in the energy-constrained regime (ECR), characterized by a steady-state distribution, or in the energy-sufficient regime (ESR). Using the above notation, we summarize the steady-state behavior of the Markov chain as follows. 
The system operates in the energy-constrained regime (ECR) or energy-sufficient regime (ESR), depending on whether harvesting exceeds consumption. The steady-state behavior follows.

\begin{lemma} \label{lemma:S}
    \textit{The Markov chain has its steady-state distribution if the following condition holds: 
\begin{equation} \label{eq:esr_not_condition}
     (M+1)P_{a,d} + M P_{a,u} + (M-1)P_{a,e} + P_{a,r} > P_{a,h},
\end{equation} 
where the steady-state probability $S_m$ is given by
\begin{small}
\begin{equation} \label{eq:distribution}
    S_m\!=\! 
    \begin{cases}
        S_0,  &m=0,\\ \\
        \begin{aligned}
             & S_0 \left( 1 + z + \frac{P_{a,u}}{P_{a,d}} \right) \\
             & + S_0 \left( \tfrac{P_{a,e} + z P_{a,u} + z^2 P_{a,d}}{P_{a,d}} \right) \left( \tfrac{1 - z^{m-1}}{1-z} \right), 
        \end{aligned} &m\in[1,M-1],\\ \\
        S_0  \frac{P_{a,h}}{z P_{a,d}}, ~~&m=M,\\ \\
        S_0 \left( \frac{P_{s,h}}{P_{a,d}} \right) z^{m-M-1}, \qquad &m \geq M+1,
    \end{cases}
\end{equation}
\end{small}with the expression of $S_0$ given by \eqref{eq:S0} at the top of next page, and the variable $z \in (0,1)$ is the root of 
\begin{equation} \label{eq:z_general:main}
\begin{split}
   P_{a,d}z^{M+2} &+ P_{a,u} z^{M+1} + P_{a,e} z^{M} + P_{a,r}z^2 \\
   &\quad\quad\quad\quad+(P_{a,i}-1)z + P_{a,h}= 0.
\end{split}
\end{equation}
}
\end{lemma}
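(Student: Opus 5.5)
We propose to treat the energy process as a skip-free-upward (upward jumps of size one only) Markov chain. We write the global balance equations and solve them with a geometric ansatz in the active region, then propagate the solution down into the silent region. The active rows are homogeneous: for $m\ge M+1$ the transition law $P_{a,y}$ does not depend on $m$.

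\textbf{Balance equations in the active region.} For a state $m$ well inside the active region (concretely $m\ge 2M+2$, so that every source of inflow is itself active), the balance equation reads
\begin{equation*}
S_m = P_{a,h}S_{m-1}+P_{a,i}S_m+P_{a,r}S_{m+1}+P_{a,e}S_{m+M-1}+P_{a,u}S_{m+M}+P_{a,d}S_{m+M+1}.
\end{equation*}
Substituting $S_m=Cz^{m-M-1}$ and dividing by $Cz^{m-M-2}$ gives exactly \eqref{eq:z_general:main}. We must show this polynomial has a unique root in $(0,1)$. Write it as $f(z)=\phi(z)-z$, where
\begin{equation*}
\phi(z)=P_{a,h}+P_{a,i}z+P_{a,r}z^2+P_{a,e}z^M+P_{a,u}z^{M+1}+P_{a,d}z^{M+2}.
\end{equation*}
Here $\phi$ is the probability generating function of $1-(\text{energy change})$, it is convex, and $\phi(1)=1$. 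Moreover $\phi(0)=P_{a,h}>0$ and $\phi'(1)-1=P_{a,r}+(M-1)P_{a,e}+MP_{a,u}+(M+1)P_{a,d}-P_{a,h}$, using $\sum_y P_{a,y}=1$. Condition \eqref{eq:esr_not_condition} says precisely that this quantity is positive, i.e. the drift is negative. A convex $\phi$ with $\phi(0)>0$, $\phi(1)=1$ and $\phi'(1)>1$ crosses the diagonal exactly once in $(0,1)$. This gives the unique $z$, and $\sum_m z^{m}<\infty$ gives positive recurrence. Conversely, if the condition fails the only root is $z=1$ and no normalizable solution exists.

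\textbf{Boundary layer.} Next we treat the states $M+1\le m\le 2M+1$ and the silent states $0\le m\le M$, in which the inflow mixes silent and active transitions. We will verify that the geometric form holds down to $m=M+1$. The balance equation at $m=M+1$ involves the inflow $P_{s,h}S_M$ from the silent region, and this fixes the constant $C$ in terms of $S_M$. The balance equations for $0\le m\le M$ then let us solve downward:
\begin{itemize}
\item At $m=0$, the only inflow is the deep update from $M+1$, giving $\xi S_0=P_{a,d}S_{M+1}$, hence $S_{M+1}=S_0P_{s,h}/P_{a,d}$.
\item At $m=1$, the inflow includes $P_{s,h}S_0+P_{a,u}S_{M+1}+P_{a,d}S_{M+2}$.
\item In general, the states $m\in[1,M-1]$ receive contributions from the $e,u,d$ jumps out of $S_{m+M-1},S_{m+M},S_{m+M+1}$.
\end{itemize}
Because these three active sources are geometric, summing the telescoping relation $\xi(S_m-S_{m-1})=P_{a,e}S_{m+M-1}+P_{a,u}S_{m+M}+P_{a,d}S_{m+M+1}$ produces the geometric series factor $\frac{1-z^{m-1}}{1-z}$ together with the coefficient $\frac{P_{a,e}+zP_{a,u}+z^2P_{a,d}}{P_{a,d}}$. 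The state $m=M$ is handled separately and should yield $S_M=S_0P_{a,h}/(zP_{a,d})$. Finally, $S_0$ follows from the normalization $\sum_m S_m=1$, which involves a finite sum of the affine-plus-geometric terms and the tail $\frac{P_{s,h}}{P_{a,d}(1-z)}$.

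\textbf{Main obstacle.} The delicate step is the boundary layer. For $M+1\le m\le 2M+1$, some inflow terms come from silent states with non-geometric values, and some active outflow targets land in the silent region. We must check that these equations are consistent with the single geometric tail, i.e. that no extra eigen-component from the other roots of \eqref{eq:z_general:main} is needed. Our approach is to use level-crossing (cut) equations between $\{0,\dots,m\}$ and $\{m+1,\dots\}$ instead of pointwise balance. Since upward jumps have size one, the cut equation $P_{h}S_m=\sum(\text{downward flows across the cut})$ involves only finitely many active states above the cut. Substituting the geometric ansatz then reduces to an identity equivalent to \eqref{eq:z_general:main} divided by $(1-z)$. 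We would carry out this check explicitly at $m=M$ and $m=M+1$, which also confirms the stated closed forms for $S_M$ and $S_{M+1}$.
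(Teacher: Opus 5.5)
Your proposal is correct and follows essentially the same route as the paper. Both use the global balance equations and a geometric tail $S_m\propto z^{m-M-1}$, with $z$ the unique root in $(0,1)$ of \eqref{eq:z_general:main}, obtained from convexity of $f(z)=\phi(z)-z$ with $f'(1)>0$ equivalent to \eqref{eq:esr_not_condition}, and then derive $S_{M+1}$ from $m=0$, $S_1$ from $m=1$, $S_m$ for $2\le m\le M-1$ by telescoping, $S_M$ separately, and $S_0$ from normalization. Your ``boundary layer'' concern is unfounded: the homogeneous equation already holds for every $m\ge M+2$ because all inflow sources $m-1,m+1,m+M-1,m+M,m+M+1$ are active, and outflow targets landing in the silent region do not enter the balance at $m$, so the geometric form on $m\ge M+1$ is immediate and the paper gets $S_M=S_0P_{a,h}/(zP_{a,d})$ directly from the $m=M+1$ balance equation combined with \eqref{eq:z_general:main} divided by $z$, making your cut-equation check valid but unnecessary.
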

\begin{IEEEproof}
Please see Appendix~\ref{app:proof_B}.
\end{IEEEproof}

\begin{figure*}[t]
%\hrule
\begin{align}\label{eq:S0}
        S_0\!=\!\Biggl( 1 \!+\! \frac{P_{a,h}}{z P_{a,d}}\!+\! \frac{(M\!-\!1)(P_{a,e} + P_{a,u} + P_{a,d}) + P_{s,h}}{P_{a,d}(1-z)} - \frac{P_{a,e} + z P_{a,u} + z^2 P_{a,d}}{P_{a,d}(1-z)^2} - \frac{z^2 P_{a,r} + z(P_{a,i}-1) + P_{a,h}}{z P_{a,d}(1-z)^2} \Biggr)^{-1},
\end{align}
 \hrule
\end{figure*}

Consequently, we derive the active probability of a typical node as follows: 
\begin{equation} \label{eq:p_a_unified}
    \begin{split}
        p_\mathrm{a}
        = \min\left\{\sum_{m=M+1}^{\infty} \!\! S_m,1\right\} = \min\left\{\frac{S_0 \xi}{P_{a,d}(1-z)},1\right\}.
    \end{split}
\end{equation}

By substituting \eqref{eq:p_a_unified} into \eqref{eq:p_s_unified}, we can obtain the complete expression for the transmission success probability. 
Next, we characterize $\mathbb{E}[T]$ and $\mathbb{E}[T^2]$.
Note that the transmission interval $T$ between two consecutive successful transmissions is composed of the access waiting time $T_A$ and the energy accumulation time $T_E$, i.e., $T = T_A + T_E$. The statistical properties of these components are summarized in the following lemma.

\begin{lemma} \label{lemma:time_moments}
The first and second moments of $T_A$ are 
\begin{equation} \label{eq:TA_moments}
    \mathbb{E}[T_A]= \frac{1}{p_\mathrm{T}}, \quad \mathbb{E}[T_A^2]= \frac{2-p_\mathrm{T}}{p_\mathrm{T}^2},
\end{equation}
and the first and second moments of $T_E$ are 
\begin{equation} \label{eq:E_TE_unified}
    \mathbb{E}[T_E]=\sum_{l=1}^{M}\mathbb{P}(Q=l) \frac{l}{\xi},
\end{equation} 
and
\begin{equation} \label{eq:E_TE2_unified}
    \mathbb{E}[T_E^2]=\sum_{l=1}^{M}\mathbb{P}(Q=l) \frac{l(l-\xi+1)}{\xi^2},
\end{equation}
where $\mathbb{P}(Q=l)$ represents the probability distribution of the energy deficit $Q$ after a transmission attempt, defined as:
%. Based on the consumption weight coefficients $\Omega$, $\mathbb{P}(Q=l)$ 
\begin{equation} \label{eq:prob_Q}
\begin{split}
   &\mathbb{P}(Q=l)=\\
        &\begin{cases}
            \frac{\Omega_{deep} S_{2M+2-l} + \Omega_{std} S_{2M+1-l} + \Omega_{eco} S_{2M-l}}{p_\mathrm{a}}, & 1 \le l\!\le \!M\!-\!1, \\
            \frac{\Omega_{deep} S_{M+2} + \Omega_{std} S_{M+1}}{p_\mathrm{a}}, & l = M, \\
            \frac{\Omega_{deep} S_{M+1}}{p_\mathrm{a}}, & l = M+1,
        \end{cases}    
\end{split}
\end{equation}
where $\Omega_{deep}$, $\Omega_{std}$, and $\Omega_{eco}$ denote the conditional probabilities that the previous transmission round consumed $M+1$, $M$, and $M-1$ energy units, respectively.
\end{lemma}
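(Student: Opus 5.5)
We split the inter-update interval $T=T_A+T_E$ into its two components. The access waiting time $T_A$ begins once the node has at least $M+1$ energy units. The energy accumulation time $T_E$ is the time needed, after a transmission attempt, to refill the buffer back to the activation threshold. We treat the two parts separately and rely on the Markov-chain stationary law of Lemma~\ref{lemma:S} only to weight the initial energy deficit.

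\emph{Step 1 (moments of $T_A$).} Conditioned on the node being active, in every round it makes a transmission attempt independently with the per-round probability $p_\mathrm{T}$ of Lemma~\ref{lemma:ps}. This attempt is either a probe-and-reserve that succeeds, or a fallback access in the data mini-slot with probability $p_{\mathrm{ac}}(s)$ after a failed reservation. In the mean-field spirit of Lemma~\ref{lemma:ps}, the other $n-1$ nodes are independent with activity $p_\mathrm{a}$, so successive rounds are i.i.d.\ trials. Hence $T_A$ is geometric on $\{1,2,\dots\}$ with parameter $p_\mathrm{T}$. The standard identities $\mathbb{E}[T_A]=1/p_\mathrm{T}$ and $\mathbb{E}[T_A^2]=\mathrm{Var}(T_A)+\mathbb{E}[T_A]^2=(1-p_\mathrm{T})/p_\mathrm{T}^2+1/p_\mathrm{T}^2$ then give \eqref{eq:TA_moments}.

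\emph{Step 2 (moments of $T_E$ given the deficit).} Let $Q=l$ be the number of energy units the node lacks after an attempt, relative to the activation level $M+1$. Arrivals are Bernoulli($\xi$) each round, so the time to collect $l$ units is the sum of $l$ i.i.d.\ geometric($\xi$) variables, i.e.\ negative binomial. This gives $\mathbb{E}[T_E\mid Q=l]=l/\xi$ and
\begin{equation*}
\mathbb{E}[T_E^2\mid Q=l]=\frac{l(1-\xi)}{\xi^2}+\frac{l^2}{\xi^2}=\frac{l(l-\xi+1)}{\xi^2}.
\end{equation*}
Averaging over the law of $Q$ yields \eqref{eq:E_TE_unified} and \eqref{eq:E_TE2_unified}. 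A deficit of $l=0$ contributes zero, and the sum over $l\le M$ is consistent with the stated support. I would note that the $l=M+1$ case in \eqref{eq:prob_Q} must be reconciled with the summation range. Either it is absorbed by the convention that the recharge starts one unit higher, or the summation limit should read $M+1$; I would check this against the chain's level indexing.

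\emph{Step 3 (law of $Q$).} This step is the main obstacle. We condition on an active node at stationary level $m\ge M+1$ making an attempt. It lands at $m-(M+1)$, $m-M$ or $m-(M-1)$ with the conditional weights $\Omega_{deep},\Omega_{std},\Omega_{eco}$, which are obtained from $P_{a,d},P_{a,u},P_{a,e}$ normalized by their sum. The deficit relative to $M+1$ is then $l=2M+2-m$, $2M+1-m$ or $2M-m$, respectively. Inverting these relations gives the indices $S_{2M+2-l}$, $S_{2M+1-l}$ and $S_{2M-l}$. Dividing by $p_\mathrm{a}=\sum_{m\ge M+1}S_m$ conditions on being active. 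The boundary cases $l=M$ and $l=M+1$ arise because the index must stay $\ge M+1$, which truncates the terms. States with $m\ge 2M+2$ produce no deficit and give $T_E=0$.

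The delicate point is justifying this Palm-type weighting. The level seen at an attempt epoch is not exactly the stationary $S_m$, since longer waits bias the distribution. I would argue that, conditional on activity, the attempt in each round is independent of $m$, so attempt epochs sample active levels proportionally to $S_m$. This is the approximation the paper adopts, and it closes the proof.
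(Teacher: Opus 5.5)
Your proposal is correct and follows the paper's proof in Appendix~\ref{app:proof_C} essentially step for step. You take $T_A$ geometric with parameter $p_\mathrm{T}$ and $T_E$ negative binomial given $Q=l$, and you obtain the law of $Q$ by inverting the deep, standard and economical post-attempt levels into the indices $2M+2-l$, $2M+1-l$, $2M-l$ and normalizing by $p_\mathrm{a}$. Your justification of the stationary weighting, which the paper merely asserts, is in fact exact within the Markov model rather than an approximation, because the per-round outcome probabilities in the active regime do not depend on $m$.

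Your flag on the summation range is also well founded, and the right fix is your second option. Since $\mathbb{P}(Q=M+1)=\Omega_{deep}S_{M+1}/p_\mathrm{a}>0$, the sums in \eqref{eq:E_TE_unified} and \eqref{eq:E_TE2_unified} must run to $M+1$. Indeed, the $z\to 0$ value $(M+q-\xi)/\xi$ of \eqref{eq:ETE_case1} is obtained only with that term included. The paper's proof does not address this.
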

\begin{IEEEproof}
%The proof is straightforward and is omitted for brevity.
Please see Appendix~\ref{app:proof_C}.
\end{IEEEproof}

Consequently, we have $\mathbb{E}[T] = \mathbb{E}[T_A] + \mathbb{E}[T_E]$ and $\mathbb{E}[T^2] = \mathbb{E}[T_A^2] + 2\mathbb{E}[T_A]\mathbb{E}[T_E] + \mathbb{E}[T_E^2]$.
Toward this end, the network-average AoI can be obtained as follows.
\begin{theorem}
    If condition \eqref{eq:esr_not_condition} holds, the network-average AoI is
    \begin{equation} \label{eq:unified aoi}
        \begin{split}
            \bar{\Delta} 
            &=\frac{1}{p_\mathrm{T} p_\mathrm{s}} +\left(\frac{1}{p_\mathrm{s}}-1\right)\mathbb{E}[T_E]+  \frac{\mathbb{E}[T_E^2]+\mathbb{E}[T_E]}{2\left(\frac{1}{p_\mathrm{T}}+\mathbb{E}[T_E]\right)},
        \end{split}
    \end{equation}
    where $p_\mathrm{s}$, $p_\mathrm{T}$, $p_\mathrm{a}$ and $\mathbb{E}[T_E]$ are given in \eqref{eq:p_s_unified}, \eqref{eq:p_T},   \eqref{eq:p_a_unified} and \eqref{eq:E_TE_unified}, respectively.    
    If condition \eqref{eq:esr_not_condition} is not satisfied, the network-average AoI is
    \begin{equation} \label{eq:unified_aoi_esr}
        \begin{split}
            \bar{\Delta} = \frac{1}{q (1 \!- \!q)^{n-1}\!+\! p_{\mathrm{ac}}(s)(1\!-\! p_{\mathrm{ac}}(s))^{n-1} \!\left( 1\!-\!nq(1\!-\!q)^{n-1} \right)},
        \end{split}
    \end{equation}
    where $p_{\mathrm{ac}}(s)$ is given by \eqref{eq:p_access_cases}.
\end{theorem}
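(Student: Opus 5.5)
Two cases are needed, according to whether condition \eqref{eq:esr_not_condition} holds. In the energy-constrained case we substitute into the generic formula \eqref{eq:aoi}; in the other case we reduce to a system without energy constraints.

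\textbf{Energy-constrained case.} Assume \eqref{eq:esr_not_condition} holds. By Lemma~\ref{lemma:S} the energy chain has a stationary distribution, so $p_\mathrm{a}$ in \eqref{eq:p_a_unified} is well defined. Lemma~\ref{lemma:ps} then gives $p_\mathrm{s}$ and $p_\mathrm{T}$. We write the inter-update interval as $T=T_A+T_E$ and treat $T_A$ and $T_E$ as independent. This is justified because $T_A$ is a geometric access delay driven only by the per-slot attempt probability $p_\mathrm{T}$, while $T_E$ is driven only by the Bernoulli harvesting process and the deficit $Q$. Hence
\[
\mathbb{E}[T]=\tfrac{1}{p_\mathrm{T}}+\mathbb{E}[T_E],\qquad
\mathbb{E}[T^2]=\tfrac{2-p_\mathrm{T}}{p_\mathrm{T}^2}+\tfrac{2\mathbb{E}[T_E]}{p_\mathrm{T}}+\mathbb{E}[T_E^2],
\]
using Lemma~\ref{lemma:time_moments}.

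We then substitute these moments into \eqref{eq:aoi}. The term $(1/p_\mathrm{s}-1)\mathbb{E}[T]$ splits into $(1/p_\mathrm{s}-1)/p_\mathrm{T}$ and $(1/p_\mathrm{s}-1)\mathbb{E}[T_E]$. The ratio $\mathbb{E}[T^2]/(2\mathbb{E}[T])$ is rearranged by completing the square in $1/p_\mathrm{T}$. Specifically,
\[
\tfrac{2-p_\mathrm{T}}{p_\mathrm{T}^2}+\tfrac{2\mathbb{E}[T_E]}{p_\mathrm{T}}
=\tfrac{2}{p_\mathrm{T}}\Bigl(\tfrac{1}{p_\mathrm{T}}+\mathbb{E}[T_E]\Bigr)-\tfrac{1}{p_\mathrm{T}},
\]
so that
\[
\frac{\mathbb{E}[T^2]}{2\mathbb{E}[T]}=\frac{1}{p_\mathrm{T}}+\frac{\mathbb{E}[T_E^2]-1/p_\mathrm{T}}{2(1/p_\mathrm{T}+\mathbb{E}[T_E])}.
\]
Adding the constant $\tfrac12$, written as $\tfrac{1/p_\mathrm{T}+\mathbb{E}[T_E]}{2(1/p_\mathrm{T}+\mathbb{E}[T_E])}$, turns the numerator into $\mathbb{E}[T_E^2]+\mathbb{E}[T_E]$. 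The remaining pieces are $\tfrac{1}{p_\mathrm{T}}$ and $(1/p_\mathrm{s}-1)/p_\mathrm{T}$, which combine to $\tfrac{1}{p_\mathrm{T}p_\mathrm{s}}$. The result is \eqref{eq:unified aoi}.

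\textbf{Energy-sufficient case.} Assume \eqref{eq:esr_not_condition} fails. Then the drift is nonnegative and the chain is transient or null recurrent toward infinite energy. Every node therefore has at least $M+1$ units almost always, so $p_\mathrm{a}=1$, consistent with the $\min\{\cdot,1\}$ in \eqref{eq:p_a_unified}. Energy is never a bottleneck, so $T_E=0$ and $T=T_A$.

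Setting $p_\mathrm{a}=1$ in Lemma~\ref{lemma:ps} gives $p_{\mathrm{c},\mathrm{s}}=p_{\mathrm{ac}}(s)(1-p_{\mathrm{ac}}(s))^{n-1}$. The AoI of this always-active system is then governed by the per-slot probability that the node delivers an update successfully,
\[
\pi=p_\mathrm{T}p_\mathrm{s}=q(1-q)^{n-1}+p_{\mathrm{c},\mathrm{s}}\bigl(1-nq(1-q)^{n-1}\bigr).
\]
Successful deliveries form a Bernoulli process with parameter $\pi$, because each node always holds a fresh packet and slots are i.i.d. The inter-delivery time is therefore geometric with parameter $\pi$. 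Applying $\mathbb{E}[X^2]/(2\mathbb{E}[X])+\tfrac12$ with the geometric moments gives $1/\pi$, under the paper's convention that the $\tfrac12$ offset absorbs the one-slot delay. This is \eqref{eq:unified_aoi_esr}.

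\textbf{Main obstacle.} The delicate step is justifying the second case: the AoI expression should be read with success as a single per-slot event of probability $p_\mathrm{T}p_\mathrm{s}$, not via the ECR formula with $T_E=0$. Naively plugging $T_E=0$ into \eqref{eq:unified aoi} gives $1/(p_\mathrm{T}p_\mathrm{s})$ plus an extra $\tfrac12$ term. I would resolve this in one of two ways. The first is to argue directly from the geometric renewal structure, with the $\tfrac12$ convention matched as in \cite{zhao2025age}. The second is to show the discrepancy is exactly the offset convention used there.

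A secondary point to check is the independence of $T_A$ and $T_E$, which is required for the cross-term $2\mathbb{E}[T_A]\mathbb{E}[T_E]$.
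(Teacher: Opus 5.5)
Your argument is correct and follows the paper's route: decompose $T=T_A+T_E$, treat $T_A$ and $T_E$ as independent (the paper makes the same assumption when it writes the cross term $2\mathbb{E}[T_A]\mathbb{E}[T_E]$), substitute the moments of Lemma~\ref{lemma:time_moments} into \eqref{eq:aoi}, and in the ESR take $p_\mathrm{a}=1$, $T_E=0$; your algebra for the first case checks out. The one thing to fix is your ``main obstacle'', which does not exist: with $T_E\equiv 0$ the last two terms of \eqref{eq:unified aoi} vanish identically (the numerator $\mathbb{E}[T_E^2]+\mathbb{E}[T_E]$ is $0$ and the denominator is $2/p_\mathrm{T}>0$), so the formula gives exactly $1/(p_\mathrm{T}p_\mathrm{s})$ with no extra $\tfrac12$, and the ESR expression is just the $p_\mathrm{a}=1$, $T_E=0$ specialization of the first case, consistent with your direct geometric computation.
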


Notably, \eqref{eq:unified aoi} characterizes the network-average AoI in ECR, while \eqref{eq:unified_aoi_esr} provides the network-average AoI in ESR.

In what follows, we leverage the result above to examine AoI under the three different channel access mechanisms. 

\subsection{Case Studies}
Since the derivations for the three cases follow a similar procedure, we present the detailed derivation only for the most involved AUC case, while omitting the derivations for RUC and SAFC for brevity.

\subsubsection{AUC} 
Upon a channel reservation failure, all active source nodes can contend for the channel access during the subsequent data transmission phase.
As such, the harvested energy states admit the following transition probabilities 
\begin{equation}
    \begin{cases}
    P_{a,d} = (1-\xi)q \left( P_{0}^{n-1} + (1-P_{0}^{n-1})\eta \right), \\
     P_{a,u} = \xi q \left( P_{0}^{n-1} + (1-P_{0}^{n-1})\eta \right)  \\
    ~~~~~~+ (1-\xi)(1-q)(1-P_{1}^{n-1})\eta,\\
    P_{a,e} = \xi(1-q)(1-P_{1}^{n-1})\eta, \\
     P_{a,r} = (1-\xi)q (1-P_{0}^{n-1})(1-\eta), \\
      P_{a,i} = \xi q (1-P_{0}^{n-1})(1-\eta) \\
    ~~~+ (1-\xi)(1-q) \left( P_{1}^{n-1} \!+\!(1\!-\!P_{1}^{n-1})(1-\eta) \right), \\
     P_{a,h}= \xi(1-q) \left( P_{1}^{n-1} + (1-P_{1}^{n-1})(1-\eta) \right) 
    \end{cases}
\end{equation}
where $P_{0}^{n-1}$ and $P_{1}^{n-1}$ denote the probabilities that zero and exactly one other node succeed in reservation, respectively:
\begin{equation} \label{eq:p_res_n-1}
    \begin{split}
        P_{0}^{n-1} &= (1 - p_\mathrm{a} q)^{n-1}, \\
        P_{1}^{n-1} &= (n-1) p_\mathrm{a} q( 1 - p_\mathrm{a} q)^{n-2},
    \end{split}
\end{equation}
where $p_\mathrm{a}$ stands for the node active probability under the AUC mechanism. 

Consequently, we can obtain the active probability of a typical source node as
\begin{equation} \label{eq:p_a_AUC}
    p_\mathrm{a} = \tfrac{\xi}{ M \Bigl( q \bigl( P_{0}^{n-1} + (1-P_{0}^{n-1})\eta \bigr)
    + \eta\left(1-q\right)\left(1-P_{1}^{n-1}\right) \Bigr) + q}.
\end{equation}
Then, the transmission attempt probability and transmission success probability are respectively given as
\begin{equation} \label{eq:pT_case1}
    \begin{split}
        p_\mathrm{T} 
        &= \eta + q(1 - n \eta p_\mathrm{a})(1 - p_\mathrm{a} q)^{n-1},
    \end{split}
\end{equation}
and 
\begin{equation} \label{eq:ps_case1}
    \begin{split}
        p_\mathrm{s}\!=\!\tfrac{ \eta (1 - p_\mathrm{a} \eta)^{n-1}+q (1- p_\mathrm{a} q)^{n-1}\Big(1-n\eta p_\mathrm{a} (1-p_\mathrm{a} \eta)^{n-1} \Big)}{\eta + q\left( 1 - n\eta p_\mathrm{a} \right) (1 - p_\mathrm{a} q)^{n-1} }.
    \end{split}
\end{equation}

Moreover, the weight coefficients $\Omega$ in AUC are:  $\Omega_{deep}= (1-\xi)q$, $\Omega_{std} = \xi q+(1-\xi)(1-q)$, and $\Omega_{eco}= \xi(1-q)$.
Then, following Lemma~\ref{lemma:time_moments}, we have 
\begin{equation} \label{eq:ETE_case1}
    \begin{split}
        \mathbb{E}[T_E] 
        &= \frac{1}{\xi} \left( M+q-\xi - \frac{z}{1-z} \right),
    \end{split}
\end{equation}
and
\begin{equation} \label{eq:ETE2_case1}
    \begin{split}
        \mathbb{E}[T_E^2] 
        &= \frac{1}{\xi^2} \Bigg( \left( M + q - \xi - \frac{z}{1-z} \right) \\
        & \times \left( M + q + 1 - 2\xi - \frac{z}{1-z} \right) + \frac{z}{(1-z)^2} \Bigg).
    \end{split}
\end{equation}

Finally, by substituting \eqref{eq:pT_case1}, \eqref{eq:ps_case1}, \eqref{eq:ETE_case1}, and \eqref{eq:ETE2_case1} into \eqref{eq:unified aoi}, we obtain the closed-form expression for the network-average AoI, where the result is summarized as follows{\footnote{While the complete expression of the network-average AoI can be derived, it is highly complicated; hence, we only present an approximation. }}. 

\begin{corollary}
    In the AUC scheme, if the following holds
    \begin{small}
    \begin{equation} \label{eq:esr_not_condition1}
        \begin{split}
            \xi\!<\!q\!+\!M\bigg(\!\eta\!+\!q(\!1\!-\!p_\mathrm{a} q)^{n\!-\!2} \!\Big(\!(\!1\!-\!\eta)(\!1\!-\!p_\mathrm{a})\!+\!p_\mathrm{a}(\!1\!-\!q)(1\!-\!n\eta)\!\Big)\!\!\bigg),
        \end{split}
    \end{equation}
    \end{small}
%     the network-average AoI is given by \eqref{eq:AoI:final:case1} at the top of this page, 
%     \begin{figure*}
%     \begin{equation} \label{eq:AoI:final:case1}
%     \begin{split} 
%     \bar{\Delta} 
%     &= \frac{ 1 + \frac{\eta}{\xi} \left( 1 - (1 - p_a \eta)^{n-1} \right) \left( 1 - n p_a q (1 - p_a q)^{n-1} \right)\left( M+q-\xi - \frac{z}{1-z} \right)}{q(1-p_a q)^{n-1}+ (\eta (1 - p_a \eta)^{n-1})(1 - n p_a q(1-p_a q)^{n-1}))} \\
%     &\quad + \frac{(\eta + (q(1-p_a q)^{n-1})(1 - n\eta  p_a))\bigg(\left( M+q-\xi - \frac{z}{1-z} \right)\left( M+q-\xi+\frac{1-2z}{1-z} \right)+\frac{z}{(1-z)^2} \bigg)}{2\xi\left(1 + \left(\eta + (q(1-p_a q)^{n-1}\right)\left(1 - n \eta  p_a\right)\right)\left( M+q-\xi - \frac{z}{1-z} \right)},
%     \end{split} 
%     \end{equation}
%     \hrule
%    % \vspace{-0.3cm}
%     \end{figure*}
%     in which $z$ is the root of
%     \begin{equation} \label{eq:z_general:case1}
% \begin{split}
%    P_{a,d}z^{M+2} &+ P_{a,u} z^{M+1} + P_{a,e} z^{M} + P_{a,r}z^2 \\
%    &~~~~~~~+(P_{a,i}-1)z + P_{a,h}= 0.
% \end{split}
% \end{equation}
    the network-average AoI can be tightly approximated by
    \begin{equation}
        \begin{split}
            &\bar{\Delta} \approx \frac{1 + \frac{\eta}{\xi}(1 - e^{-n p_\mathrm{a} \eta})(1 - n p_\mathrm{a} q e^{-n p_\mathrm{a} q}) C}{q e^{-n p_\mathrm{a} q} + \eta e^{-n p_\mathrm{a} \eta} (1 - n p_\mathrm{a} q e^{-n p_\mathrm{a} q} )} \\
            & + \frac{\eta + q e^{-n p_\mathrm{a} q}}{2\xi C (1 + \eta + q e^{-n p_\mathrm{a} q})}  \times \left(\! C \left(\! C \!+\! \frac{1}{ 1 \! - \! z } \right) \!+\! \frac{z}{(1\!-\!z)^2} \right),
        \end{split}
    \end{equation}
    where $C$ is
    % \begin{equation}  \label{eq:A}   
    %     A = e^{-n p_a q},
    % \end{equation}
    % \begin{equation} \label{eq:B}
    %     B = e^{-n p_a \eta},
    % \end{equation}
    \begin{equation} \label{eq:C}
        C = M + q - \xi - \frac{z}{1-z},
    \end{equation}
    and $z$ is the root of the following
    \begin{equation} \label{eq:z_general:case1}
        \begin{split}
           P_{a,d}z^{M+2} &+ P_{a,u} z^{M+1} + P_{a,e} z^{M} + P_{a,r}z^2 \\
           &~~~~~~~+(P_{a,i}-1)z + P_{a,h}= 0.
        \end{split}
    \end{equation}
If \eqref{eq:esr_not_condition1} is not satisfied, the network-average AoI reduces to
\begin{equation}
    \begin{split}
        \bar{\Delta} 
        % &= \frac{1}{A + B(1 - n p_a A)} \\
        = \frac{1}{q(1-q)^{n-1}+ \eta(1-\eta)^{n-1}(1 - n q(1-q)^{n-1})}.
        % &= \frac{1}{p_\mathrm{T} p_\mathrm{s}}
    \end{split}
\end{equation}
\end{corollary}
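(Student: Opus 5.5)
The plan is to specialize the unified theorem to AUC and then simplify. I will not re-derive the Markov chain analysis.

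\textbf{ECR condition.} Substitute the AUC transition probabilities into condition \eqref{eq:esr_not_condition} of Lemma~\ref{lemma:S}. Group the left-hand side by powers of $M$.
\begin{itemize}
\item The $M$-multiplied part is $M\bigl(P_{a,d}+P_{a,u}+P_{a,e}\bigr)$. Summing over $\xi$ and $1-\xi$ gives $M\bigl(q(P_0^{n-1}+(1-P_0^{n-1})\eta)+(1-q)(1-P_1^{n-1})\eta\bigr)$.
\item The remaining terms are $P_{a,d}-P_{a,e}+P_{a,r}-P_{a,h}$.
\end{itemize}
The remaining terms collapse to $q-\xi$ after the $\xi$ and $(1-\xi)$ weights combine. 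This can be checked quickly: the net expected drift is $\xi$ minus the expected consumption, and the expected consumption is $q+M\cdot\Pr(\text{successful update})$. The condition therefore becomes $\xi<q+M(\cdots)$.

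Next, expand $P_0^{n-1}$ and $P_1^{n-1}$ via \eqref{eq:p_res_n-1}. Factor out $(1-p_\mathrm{a}q)^{n-2}$ and collect the $\eta$-free term, the $(1-\eta)(1-p_\mathrm{a})$ term and the $p_\mathrm{a}(1-q)(1-n\eta)$ term. This should reproduce \eqref{eq:esr_not_condition1}. The same drift balance, $\xi=q+M\Pr(\text{update})$ evaluated at $p_\mathrm{a}$, gives \eqref{eq:p_a_AUC}.

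\textbf{ECR AoI.} Take the AoI expression \eqref{eq:unified aoi} and insert \eqref{eq:pT_case1}, \eqref{eq:ps_case1}, \eqref{eq:ETE_case1} and \eqref{eq:ETE2_case1}.
\begin{itemize}
\item Write $\mathbb{E}[T_E]=C/\xi$ and $\mathbb{E}[T_E^2]=\xi^{-2}\bigl(C(C+1-\xi)+z/(1-z)^2\bigr)$. Then $\mathbb{E}[T_E^2]+\mathbb{E}[T_E]=\xi^{-2}\bigl(C(C+1)+z/(1-z)^2\bigr)$, which should lead to the bracket $C(C+1/(1-z))$ after regrouping with the $z/(1-z)$ pieces.
\item Combine the first two terms over the common denominator $p_\mathrm{T}p_\mathrm{s}$, which is exactly the numerator of \eqref{eq:ps_case1}. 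This gives $\bigl(1+(p_\mathrm{T}-p_\mathrm{T}p_\mathrm{s})\mathbb{E}[T_E]\bigr)/(p_\mathrm{T}p_\mathrm{s})$.
\item One checks that $p_\mathrm{T}-p_\mathrm{T}p_\mathrm{s}=\eta(1-(1-p_\mathrm{a}\eta)^{n-1})(1-np_\mathrm{a}q(1-p_\mathrm{a}q)^{n-1})$, which yields the first fraction.
\end{itemize}
Finally, apply the large-$n$ approximation $(1-p_\mathrm{a}x)^{n-1}\approx e^{-np_\mathrm{a}x}$ for $x\in\{q,\eta\}$. Also drop the $n\eta p_\mathrm{a}$ correction in $p_\mathrm{T}$, so that $p_\mathrm{T}\approx\eta+qe^{-np_\mathrm{a}q}$. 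The third term becomes $p_\mathrm{T}\bigl(\cdots\bigr)/\bigl(2\xi(1+p_\mathrm{T}C/\xi)\bigr)$.

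The stated denominator $2\xi C(1+\eta+qe^{-np_\mathrm{a}q})$ requires a further heuristic simplification. I will present it as part of the approximation and justify it numerically, i.e., by a simulation match, rather than exactly. This bookkeeping, and the justification of tightness, is the main obstacle.

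\textbf{ESR case.} When \eqref{eq:esr_not_condition1} fails, $p_\mathrm{a}=1$ by \eqref{eq:p_a_unified}. Then $T_E=0$, and \eqref{eq:unified_aoi_esr} with $p_{\mathrm{ac}}=\eta$ gives the final expression directly.
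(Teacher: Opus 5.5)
\textbf{Verdict: same route as the paper, but there is a genuine gap in the third AoI term.}

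Your route is the paper's own. You specialize the unified AoI formula to AUC, substitute $p_\mathrm{T}$, $p_\mathrm{s}$, $\mathbb{E}[T_E]$ and $\mathbb{E}[T_E^2]$, and replace $(1-p_\mathrm{a}x)^{n-1}$ by $e^{-np_\mathrm{a}x}$; the paper's argument is nothing more than this. Several of your steps check out:
\begin{itemize}
\item the residual terms $P_{a,d}-P_{a,e}+P_{a,r}-P_{a,h}$ do collapse to $q-\xi$;
\item expanding $P_0^{n-1}$ and $P_1^{n-1}$ reproduces the stated ECR condition exactly;
\item drift balance gives $p_\mathrm{a}$ more directly than going through $S_0$;
\item the identity $p_\mathrm{T}-p_\mathrm{T}p_\mathrm{s}=\eta\bigl(1-(1-p_\mathrm{a}\eta)^{n-1}\bigr)\bigl(1-np_\mathrm{a}q(1-p_\mathrm{a}q)^{n-1}\bigr)$ is correct.
\end{itemize}
So the first fraction and the ESR case come out right.

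The problem is the third term, and neither of your two proposed moves closes it.

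First, the claimed regrouping is false. Since $C$ already contains $-\frac{z}{1-z}$, you have $C+\frac{1}{1-z}=M+q-\xi+1$, whereas $C+1=M+q-\xi+1-\frac{z}{1-z}$. Hence
\begin{equation*}
C\Bigl(C+\tfrac{1}{1-z}\Bigr)+\tfrac{z}{(1-z)^2}=C(C+1)+\tfrac{z}{(1-z)^2}+\tfrac{Cz}{1-z},
\end{equation*}
and there are no leftover $z/(1-z)$ pieces to absorb the extra $\frac{Cz}{1-z}$.

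Second, your substitution actually yields the third term
\begin{equation*}
\frac{p_\mathrm{T}\bigl(C(C+1)+\frac{z}{(1-z)^2}\bigr)}{2\xi\,(\xi+p_\mathrm{T}C)}.
\end{equation*}
Turning $\xi+p_\mathrm{T}C$ into $C(1+p_\mathrm{T})$ amounts to replacing $\xi$ by $C$. But $C=\xi\,\mathbb{E}[T_E]=\sum_l l\,\mathbb{P}(Q=l)$ is the mean energy deficit, typically of order $M$, while $\xi<1$. Whenever $p_\mathrm{T}C\gg\xi$ the two denominators differ by a factor of about $(1+p_\mathrm{T})/p_\mathrm{T}$. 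This is not a small-parameter approximation of any kind, and justifying it by a simulation match is not a proof step.

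The honest outcome of your plan is the displayed expression above, with $p_\mathrm{T}\approx\eta+qe^{-np_\mathrm{a}q}$ if you want to match the statement's treatment of $p_\mathrm{T}$. Note that dropping $-n\eta p_\mathrm{a}qe^{-np_\mathrm{a}q}$ is itself uncontrolled when $np_\mathrm{a}\eta$ is of order one. This expression has exactly the $2\xi(\xi+p_\mathrm{T}C)$ denominator structure of the RUC and SAFC corollaries. That indicates the displayed AUC third term contains algebra slips rather than a hidden further approximation. You should derive and state this form and flag the discrepancy, rather than assert a regrouping that does not exist.
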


\subsubsection{RUC}
This mechanism imposes stricter restrictions on the data transmission phase, where only the probing nodes are allowed to transmit after a reservation failure.  The network-time average AoI expression under the RUC scheme is as follows. 
\begin{corollary}
    In RUC scheme, if the following holds
    \begin{equation} \label{eq:esr_not_condition2}
        %q + M q \left( \eta + (1-\eta)(1-q)^{n-1} \right) > \xi,
        q \left( 1 + M \left( \eta + (1-\eta)(1-p_\mathrm{a} q)^{n-1} \right) \right) > \xi,
    \end{equation}
    the network-average AoI can be tightly approximated by
\begin{equation}
    \begin{split}
        \bar{\Delta} &\approx \frac{\xi + q C \eta \left(1 - e^{-n p_\mathrm{a} q}\right)\left(1 - e^{-n q p_\mathrm{a} \eta}\right)}{\xi q \left( e^{-n p_\mathrm{a} q} + \eta e^{-n q p_\mathrm{a} \eta} \left(1 - e^{-n p_\mathrm{a} q}\right) \right)} \\
        &\quad + \frac{q \left( \eta\left(1 - e^{-n p_\mathrm{a} q}\right) + e^{-n p_\mathrm{a} q} \right)}{2\xi \left( \xi + q C \left( \eta\left(1 - e^{-n p_\mathrm{a} q}\right) + e^{-n p_\mathrm{a} q} \right) \right)} \\
        &\quad \times \left( C \left( C + \xi + \frac{4z}{1-z} \right) + \frac{3z^2-z}{(1-z)^2} \right),
    \end{split}
\end{equation}
    where $C$ is given by \eqref{eq:C} and $p_\mathrm{a}$ is given by
\begin{equation} \label{eq:p_a_RUC}
\small
     p_\mathrm{a} \!=\!\frac{\xi z (1\!-\!z^M\!) \left( (1\!-\!\xi)z \!+\! \xi \right)}{ q z (\!1\!-\!z^M\!) \!\left( \!(1\!-\!\xi)z \!+\! \xi \right) \!+\! M \!(\!1\!-\!z) \!\left( \xi(1\!-\!q)\! -\! (1\!-\!\xi)q z \right)},
\end{equation}
in which $z\in (0,1)$ is a root of
\begin{equation} \label{eq:z_RUC}
    \begin{split}
        P_{a,d}z^{M+2}
       \!+\!\!P_{a,u} z^{M+1}
       \!+\!P_{a,r}z^2\!\!+\!(P_{a,i}\!-\!1)z
       \!+\!P_{a,h}\!=\!0.
    \end{split}
\end{equation}
If \eqref{eq:esr_not_condition2} is not satisfied, the network-average AoI simplifies to
\begin{equation}
    \begin{split}
        \bar{\Delta} 
        % &= \frac{1}{q \left[ C(1 - \eta^2 n q p_a H) + \eta^2 H D \right]} \\
        &= \frac{1}{q} \bigg( (1-q)^{n-1} \left(1 - n \eta q (1-q\eta)^{n-1}\right) \\
    &\quad + \eta (1-q\eta)^{n-1} \left(1 - (1-q)^n\right) \bigg)^{-1}.
        % &= \frac{1}{p_\mathrm{T} p_\mathrm{s}}
    \end{split}
\end{equation}
\end{corollary}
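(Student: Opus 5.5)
We would prove the RUC corollary by specializing the unified Theorem to $p_{\mathrm{ac}}(\mathrm{RUC})=q\eta$, in the same order as the AUC derivation. We first write the transition probabilities of the Markov chain in Lemma~\ref{lemma:S} for RUC. A node that does not probe (probability $1-q$) can never transmit in RUC, so the economical outcome $e$ cannot occur and $P_{a,e}=0$. This explains why the $z^M$ term is absent from \eqref{eq:z_RUC}.

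A probing node (probability $q$) either reserves the channel (probability $P_0^{n-1}$) or, after a failed reservation, transmits with probability $\eta$. Hence
\[
P_{a,d}=(1-\xi)q\bigl(P_0^{n-1}+(1-P_0^{n-1})\eta\bigr),\qquad
P_{a,r}=(1-\xi)q(1-P_0^{n-1})(1-\eta).
\]
The probabilities $P_{a,u}$ and $P_{a,i}$ are the same two events with an energy arrival. In addition, $P_{a,i}$ receives the term $(1-\xi)(1-q)$ and $P_{a,h}=\xi(1-q)$.

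Next we check the stability condition \eqref{eq:esr_not_condition}. The mean net drain is
\[
(M+1)P_{a,d}+MP_{a,u}+P_{a,r}-P_{a,h}
=q\Bigl(1+M\bigl(\eta+(1-\eta)P_0^{n-1}\bigr)\Bigr)-\xi .
\]
This reproduces \eqref{eq:esr_not_condition2}.

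For $p_\mathrm{a}$ we plan to sum $S_m$ from Lemma~\ref{lemma:S} with $P_{a,e}=0$ and use \eqref{eq:p_a_unified}. The characteristic polynomial \eqref{eq:z_RUC} lets us eliminate $P_{a,i}$. Factoring it as a geometric sum $\sum_{k<M}z^k=(1-z^M)/(1-z)$ should bring $S_0^{-1}$ to the stated form \eqref{eq:p_a_RUC}. We expect this algebraic collapse of \eqref{eq:S0} to be the main obstacle. We would first try to express all $P_{a,\cdot}$ through $q$, $\xi$ and $\alpha:=P_0^{n-1}+(1-P_0^{n-1})\eta$, and then divide the polynomial identity by $(1-z)$.

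Then Lemma~\ref{lemma:ps} with $p_{\mathrm{ac}}=q\eta$ gives
\[
p_\mathrm{T}=q(1-p_\mathrm{a} q)^{n-1}+q\eta\bigl(1-nqp_\mathrm{a}(1-p_\mathrm{a} q)^{n-1}\bigr).
\]
The numerator of $p_\mathrm{s}$ is the same expression with $\eta$ replaced by $\eta(1-p_\mathrm{a} q\eta)^{n-1}$. We then apply the approximations $(1-x)^{n-1}\approx e^{-nx}$ for small $x$. In a form consistent with the stated result we also use $1-nqp_\mathrm{a}e^{-np_\mathrm{a} q}\approx 1-e^{-np_\mathrm{a} q}$, that is, we merge the "no one reserved" and "exactly one reserved" events in the large-$n$ regime.

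For $T_E$, the weights are $\Omega_{eco}=0$, $\Omega_{deep}=1-\xi$ and $\Omega_{std}=\xi$, conditioned on a probing transmission. Inserting the geometric tail $S_m\propto z^{m-M-1}$ into \eqref{eq:prob_Q} gives $Q$ as a shifted geometric law. Lemma~\ref{lemma:time_moments} then yields $\mathbb{E}[T_E]=C/\xi$ and $\mathbb{E}[T_E^2]$ in terms of $C$ and $z$. The extra $\xi$ and $4z/(1-z)$ terms come from the missing economical branch.

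Substituting into \eqref{eq:unified aoi} and multiplying the numerator and denominator by $\xi$ gives the two stated summands. Finally, if \eqref{eq:esr_not_condition2} fails, the chain is transient toward high energy and $p_\mathrm{a}=1$. There is no energy wait, so $\mathbb{E}[T_E]=0$ and $T$ is geometric with parameter $p_\mathrm{T}$, and the AoI becomes $1/(p_\mathrm{T}p_\mathrm{s})$. This follows by applying \eqref{eq:unified_aoi_esr} with $p_{\mathrm{ac}}=q\eta$ and rearranging, using $1-(1-q)^n=nq(1-q)^{n-1}+\dots$ grouped as stated.
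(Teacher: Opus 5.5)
Your route, specializing the unified framework with the RUC transition probabilities, is the one the paper intends; the paper writes this out explicitly only for AUC. Your transition probabilities (including $P_{a,e}=0$) and the drift computation giving \eqref{eq:esr_not_condition2} are correct. For $p_\mathrm{a}$ you need not collapse \eqref{eq:S0}. Stationarity gives the energy balance $\xi=p_\mathrm{a}\,q(1+M\alpha)$ with $\alpha=P_0^{n-1}+(1-P_0^{n-1})\eta$, and the RUC characteristic polynomial factors into
\begin{equation*}
q\alpha\, z(1-z^M)\bigl((1-\xi)z+\xi\bigr)=(1-z)\bigl(\xi(1-q)-(1-\xi)qz\bigr).
\end{equation*}
Eliminating $q\alpha$ then yields \eqref{eq:p_a_RUC} at once.

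The genuine gap is that the remaining substitutions do not produce the stated expressions, and each mismatch is bridged by a claim that fails.

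(i) ESR case. Inserting $p_{\mathrm{ac}}=q\eta$ into \eqref{eq:unified_aoi_esr} gives the denominator $q(1-q)^{n-1}+q\eta(1-q\eta)^{n-1}\bigl(1-nq(1-q)^{n-1}\bigr)$. The stated formula instead expands to $q(1-q)^{n-1}+q\eta(1-q\eta)^{n-1}\bigl(1-nq(1-q)^{n-1}-(1-q)^{n}\bigr)$. They differ by $q\eta(1-q\eta)^{n-1}(1-q)^{n}\neq 0$; for example, $0.34375$ vs.\ $0.296875$ at $n=2$, $q=\eta=1/2$. So no rearrangement works, and the factor ``at least two of the $n$ nodes probe'' needs its own justification.

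(ii) Contention factor. The step $1-np_\mathrm{a} qe^{-np_\mathrm{a} q}\approx 1-e^{-np_\mathrm{a} q}$ is not a valid approximation: the two sides differ by $(1-np_\mathrm{a} q)e^{-np_\mathrm{a} q}$, about $0.30$ at $np_\mathrm{a} q=1/2$. The defensible argument is that in RUC only probers contend. The relevant failure event is therefore the typical node's own probe failing, with probability $1-(1-p_\mathrm{a} q)^{n-1}$, which is the event you already used in $P_{a,d},P_{a,u}$. So take $p_\mathrm{T}=P_{a,d}+P_{a,u}=q\alpha$ instead of Lemma~\ref{lemma:ps} verbatim.

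(iii) Energy-accumulation moments. With your weights $\Omega_{deep}=1-\xi$, $\Omega_{std}=\xi$, $\Omega_{eco}=0$ and $S_m/p_\mathrm{a}=(1-z)z^{m-M-1}$, \eqref{eq:prob_Q} gives $\xi\,\mathbb{E}[T_E]\approx M+1-\xi-\frac{z}{1-z}=C+1-q$, not $C$. The constant in \eqref{eq:C} comes from the AUC weights (where $\Omega_{deep}-\Omega_{eco}=q-\xi$), which contradict $P_{a,e}=0$. Likewise, the bracket $C\bigl(C+\xi+\frac{4z}{1-z}\bigr)+\frac{3z^2-z}{(1-z)^2}$ is asserted rather than computed from $\mathbb{E}[Q^2]+(1-\xi)\mathbb{E}[Q]$.

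The paper omits the RUC derivation, and its stated formulas do not follow verbatim from the unified lemmas. A proof must therefore name and justify the modified factors it uses, rather than appeal to identities that do not hold.
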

%  The result shows that in the energy-sufficient regime, the system degenerates into a reservation-based SA network without energy constraints. The mandatory reservation acts as a filter that limits transmission opportunities compared to Case 1, preventing the AoI from reaching the optimal lower bound even when energy is abundant.
\subsubsection{SAFC}
In this scenario, if the channel probing phase fails, then none of the nodes are allowed to transmit in this round. The network-time average AoI expression under the SAFC scheme is as follows.
\begin{corollary}
 In SAFC scheme, if $q + M q (1 - p_\mathrm{a} q)^{n-1} > \xi$, the network-average AoI can be tightly approximated by
\begin{equation}
    \begin{split}
        \bar{\Delta} \approx \frac{1}{q e^{-n p_\mathrm{a} q}} + \frac{\xi q  e^{-n p_\mathrm{a} q} K}{2 \left( \xi + q  e^{-n p_\mathrm{a} q} C \right)},
    \end{split}
\end{equation}
where $C$ is given by \eqref{eq:C}, $K$ is given by
\begin{align}
    %\begin{split}
        K &= \frac{1}{\xi^2} \bigg( \frac{(M+1)(M+2-3\xi) - z(2M+2-3\xi) + \xi^2}{1-z} \notag\\
        &\quad + \frac{2z^2}{(1-z)^2} + \xi C \bigg),
    %\end{split}
\end{align}
and $p_\mathrm{a}$ is given by
\begin{equation} \label{eq:p_a_SAFC}
    \begin{split}
        p_\mathrm{a} 
        % &= \left\{ \frac{1-q(1-z)}{z} + \frac{q\eta}{\xi} \left[ (M-1) + (1-\xi)(1-z) - \frac{z\xi + z^2(1-\xi)}{1-z}(1-z^{M-1}) \right] \right\}^{-1} \\
        % &= \left[ \frac{1-q(1-z)}{z} + \frac{q\eta M}{\xi} - q\eta(1-z^M)\left( 1 + \frac{z}{\xi(1-z)} \right) \right]^{-1} \\
        % &= \left\{\tfrac{1-q(1-z)}{z} + q\eta \left[ \tfrac{M}{\xi} - (1-z^M)\left( 1 + \tfrac{z}{\xi(1-z)} \right) \right] \right\}^{-1}.
        &= \frac{1}{q}-\frac{1}{q}\left( \frac{(1-z) \big( q((1-\xi)z + \xi) - \xi \big)}{q z (z^M - 1)((1-\xi)z + \xi)} \right)^{\frac{1}{n-1}}.
    \end{split}
\end{equation}
If $q + M q (1 - p_\mathrm{a} q)^{n-1} \le \xi$, the network-average AoI takes the following expression
\begin{equation}\label{eq:51}
    \begin{split}
        \bar{\Delta} 
        % &= \frac{1}{p_\mathrm{T}} \\
        &= \frac{1}{q (1 - q)^{n-1}}.
    \end{split}
\end{equation}
\end{corollary}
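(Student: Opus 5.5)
The plan is to specialize the unified framework (the Theorem with \eqref{eq:unified aoi}, Lemma~\ref{lemma:ps}, Lemma~\ref{lemma:S}, Lemma~\ref{lemma:time_moments}) to SAFC, where $p_{\mathrm{ac}}(\mathrm{SAFC})=0$ and hence $p_{\mathrm{c},\mathrm{s}}=0$.

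\textbf{Step 1: attempt and success probabilities.} Lemma~\ref{lemma:ps} gives directly
\[
p_\mathrm{T}=q(1-p_\mathrm{a}q)^{n-1}, \qquad p_\mathrm{s}=1 .
\]
This is consistent with SAFC: a node transmits only after a successful reservation, and such a transmission never collides.

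\textbf{Step 2: transition probabilities and the stability condition.} In the active regime an energy-sufficient node has four possible outcomes:
\begin{itemize}
\item it probes and succeeds, with probability $qP_0^{n-1}$, consuming $M+1$ units;
\item it probes and fails, consuming one unit;
\item it stays idle;
\item in each of the above, one unit may also be harvested.
\end{itemize}
This yields
\[
P_{a,d}=(1-\xi)qP_0^{n-1},\quad P_{a,u}=\xi qP_0^{n-1},\quad P_{a,e}=0,
\]
\[
P_{a,r}=(1-\xi)q(1-P_0^{n-1}),\quad P_{a,i}=\xi q(1-P_0^{n-1})+(1-\xi)(1-q),\quad P_{a,h}=\xi(1-q).
\]
Substituting into \eqref{eq:esr_not_condition}, the left side collapses to $(1-\xi)q+MqP_0^{n-1}$ and the right side is $\xi(1-q)$. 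The inequality therefore becomes $q+Mq(1-p_\mathrm{a}q)^{n-1}>\xi$, which is the stated condition.

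\textbf{Step 3: the active probability $p_\mathrm{a}$.} Since $P_{a,e}=0$, the characteristic equation \eqref{eq:z_general:main} reduces to one in $P_0^{n-1}$, $q$, $\xi$ and $z$. I would solve this equation (linear in $P_0^{n-1}$) for $P_0^{n-1}=(1-p_\mathrm{a}q)^{n-1}$, take the $(n-1)$-th root, and isolate $p_\mathrm{a}$. This should reproduce \eqref{eq:p_a_SAFC}; the factor $z(z^M-1)((1-\xi)z+\xi)$ arises from $P_{a,d}z^{M+2}+P_{a,u}z^{M+1}$ minus its $z$-independent part. As a cross-check I would use \eqref{eq:p_a_unified}.

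\textbf{Step 4: energy moments.} With $\Omega_{deep}=1-\xi$, $\Omega_{std}=\xi$, $\Omega_{eco}=0$, I would evaluate \eqref{eq:prob_Q} using the geometric tail $S_m\propto z^{m-M-1}$ from Lemma~\ref{lemma:S}. The sums $\sum_l l\,z^{M+1-l}$ and $\sum_l l^2 z^{M+1-l}$ are standard arithmetico-geometric series. They should give $\mathbb{E}[T_E]=C/\xi$, with $C$ as in \eqref{eq:C}, and a second moment that, combined with $\mathbb{E}[T_E]$, produces the bracket $K$. This is the main computational obstacle. The truncated sums generate $z^M$ terms, which I would either show cancel through the root equation or neglect, consistent with the ``tightly approximated'' claim. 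I would expand carefully, collecting the $(1-z)^{-1}$ and $(1-z)^{-2}$ parts separately.

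\textbf{Step 5: assemble the AoI.} Plugging into \eqref{eq:unified aoi} with $p_\mathrm{s}=1$ removes the middle term. The first term becomes $1/p_\mathrm{T}$, and the last term becomes
\[
\frac{p_\mathrm{T}\,(\mathbb{E}[T_E^2]+\mathbb{E}[T_E])}{2(1+p_\mathrm{T}C/\xi)},
\]
which is the stated form after writing $\xi^2K=\xi^2(\mathbb{E}[T_E^2]+\mathbb{E}[T_E])$. I would then apply the approximation $(1-p_\mathrm{a}q)^{n-1}\approx e^{-np_\mathrm{a}q}$, valid for large $n$.

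\textbf{ESR case.} When the stability condition fails, $p_\mathrm{a}=1$ by \eqref{eq:p_a_unified}, and every node is always active. Then $T_E=0$, and \eqref{eq:unified_aoi_esr} with $p_{\mathrm{ac}}=0$ gives $1/(q(1-q)^{n-1})$, which is \eqref{eq:51}.
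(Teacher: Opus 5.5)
Your route is the one the paper intends for SAFC: it omits the derivation and simply specializes the unified theorem. Steps 1--3, the assembly in Step 5, and the ESR case are all correct. In particular, $p_\mathrm{s}=1$, your transition probabilities reproduce the stated stability condition, and solving the characteristic equation (linear in $A=(1-p_\mathrm{a}q)^{n-1}$) gives exactly \eqref{eq:p_a_SAFC}. One correction: that formula is only the characteristic equation rearranged. So \eqref{eq:p_a_unified}, or more simply the stationary energy balance $p_\mathrm{a}q(1+MA)=\xi$, is not a cross-check but the second equation that actually fixes the pair $(z,p_\mathrm{a})$.

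The genuine gap is Step 4. It carries all of the statement-specific content, and you only assert its outcome. Carry it out with your (correct) SAFC weights $\Omega_{deep}=1-\xi$, $\Omega_{std}=\xi$, $\Omega_{eco}=0$ and $S_m/p_\mathrm{a}=(1-z)z^{m-M-1}$, summing over all deficits $1\le l\le M+1$. Up to truncation terms carrying a factor $z^{M+1}$, the deficit is $Q=M+1-B-J$, with $B\sim\mathrm{Bernoulli}(\xi)$ independent of $J$ and $\mathbb{P}(J=j)=(1-z)z^j$. This gives
\begin{align*}
\xi\,\mathbb{E}[T_E]&=M+1-\xi-\frac{z}{1-z},\\
\xi^2\,\mathbb{E}[T_E^2]&=(M+1)(M+2-3\xi)+\xi^2\\
&\quad-\frac{(2M+2-3\xi)z}{1-z}+\frac{2z^2}{(1-z)^2}.
\end{align*}

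The first expression is not $C$ of \eqref{eq:C}. The $q$ in $C$ comes from AUC, where the updating node has paid the probe only with probability $q$; in SAFC every update costs $M+1$ units. The mismatch $1-q$ survives as $z\to 0$, whereas every truncation term vanishes there, so it is not a truncation effect. Substituting the root equation cannot remove it either, since that equation is an identity and does not change the value.

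Similarly, your second moment has the same constants as the bracket $K$, but not its placement of $1/(1-z)$. The first fraction in $K$ exceeds $\xi^2\mathbb{E}[T_E^2]-\frac{2z^2}{(1-z)^2}$ by $\frac{z((M+1)(M+2-3\xi)+\xi^2)}{1-z}$, a discrepancy already at first order in $z$.

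Consequently, your Step 5 substitution does not land on the displayed formula. What the framework actually delivers is that formula with $C$ replaced by $M+1-\xi-\frac{z}{1-z}$, both in the denominator and in $K$, and with the corrected second-moment bracket. To claim the statement as printed, you would need a separate argument that replacing $M+1$ by $M+q$ is a tight approximation, and it is not unless $q\approx 1$.
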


\begin{remark}
To fairly compare with benchmark schemes (e.g., SA) under varying overheads, we map the theoretical results derived in discrete rounds to the physical time domain. Considering a probing overhead ratio $\delta$, the effective duration of a round extends to $1+\delta$. This extension introduces two scaling effects on the AoI:
\begin{enumerate}
    \item Effective Energy Arrival: The round to harvest energy becomes longer. For small $\xi$ and $\delta$, the effective energy arrival probability scales to $\xi' \approx (1+\delta)\xi$.
    \item Time Unit Renormalization: The calculated average AoI in rounds, $\bar{\Delta}(\xi')$, must be converted back to physical time units by multiplying with the round duration.
\end{enumerate}
Consequently, the network-average AoI in physical time, denoted by $\bar{\Delta}_{\mathrm{phy}}$, is obtained by
\begin{equation}
    \bar{\Delta}_{\mathrm{phy}} = (1+\delta) \cdot \bar{\Delta}\big|_{\xi'\leftarrow (1+\delta)\xi},
\end{equation}
where $\bar{\Delta}|_{\xi' \leftarrow (1+\delta)\xi}$ denotes the AoI expression from Corollaries 1--3 evaluated with the scaled energy arrival rate.
\end{remark}

% The proposed analytical model requires only $\approx 0.3$~s per calculation compared to $\approx 10$~s for simulation, enabling rapid performance evaluation without simulation overhead.

% ======================= %
%    Numerical Results
% ======================= %
\section{Numerical results}
This section validates the theoretical analysis via simulations and evaluates the network-average AoI. 
We consider a random access network with $n$ source nodes and one AP, varying $n$. 
The optimal parameters $q$ and $\eta$ are obtained via grid search. 
Each simulation runs for $10^5$ time slots and is averaged over 100 realizations. 
Unless otherwise stated, $\delta=1/20$, $M=7$, and $\xi=0.1$.

From Fig.~\ref{fig:change_n_xi0.1_q} and Fig.~\ref{fig:change_n_xi0.1_eta}, the optimal parameters exhibit distinct trends across different access mechanisms. 
For RUC and SAFC, $q^*$ rapidly decreases and approaches the SA theoretical baseline $1/n$ as $n$ increases, while $\eta^*$ also decays. 
This is because strict reservation constraints force nodes to suppress access attempts in order to avoid collisions, leading to increasingly conservative operation as the network becomes denser. 
In contrast, AUC maintains relatively higher levels of $q^*$ and $\eta^*$ over a wide range of $n$, since its fallback mechanism allows nodes to continue contention after reservation failure, thereby sustaining channel activity. 

These parameter trends are consistent with the AoI performance in Fig.~\ref{fig:change_n_xi0.1} and the energy dynamics in Fig.~\ref{fig:energy_consumption_rate}. 
As the network transitions from ECR to ESR, RUC and SAFC become increasingly constrained by reservation policies, which suppress transmission opportunities and prolong the energy accumulation process. 
This leads to inefficient energy utilization, where harvested energy cannot be effectively converted into successful updates. 
In contrast, AUC alleviates this limitation by maintaining more active contention, which amortizes probing overhead and better matches energy consumption with the arrival rate $\xi$, ultimately improving AoI performance.
% As a result, conservative schemes tend to accumulate energy without effectively converting it into transmissions. 
% The suppression of access attempts prolongs the expected energy collection time $\mathbb{E}[T_E]$ and reduces channel utilization, whereas AUC mitigates this effect by amortizing probing overhead and sustaining channel activity. 
% By aligning the energy consumption rate with the arrival rate $\xi$, AUC enables more efficient use of harvested energy.

Fig.~\ref{fig:change_ratio} plots the network-average AoI versus the probing-to-data slot ratio $\delta$. 
Under low energy arrival ($\xi=0.1$), AUC maintains its advantage even at high overhead levels ($\delta \geq 0.5$). 
This highlights a key trade-off: higher signaling overhead can be beneficial under energy scarcity, as increased probing reduces collisions and avoids costly energy outages, ultimately improving AoI.

\begin{figure*}[!t]
    \centering
    \subfloat[$\bar{\Delta}$ vs $n$.]{
        \includegraphics[width=0.33\linewidth]{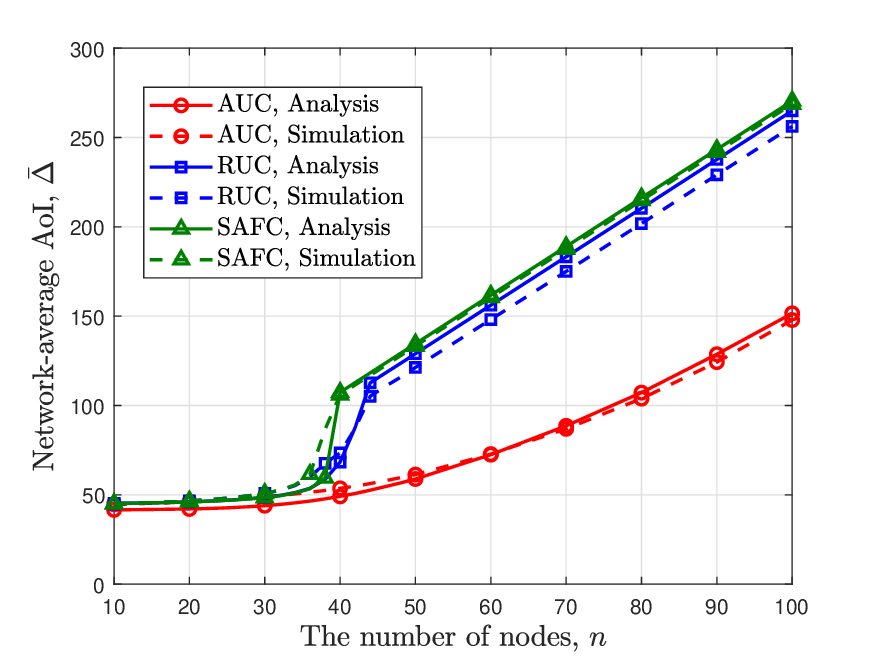}
        \label{fig:change_n_xi0.1_aoi}
    }
    \subfloat[$q$ vs $n$.]{
        \includegraphics[width=0.33\linewidth]{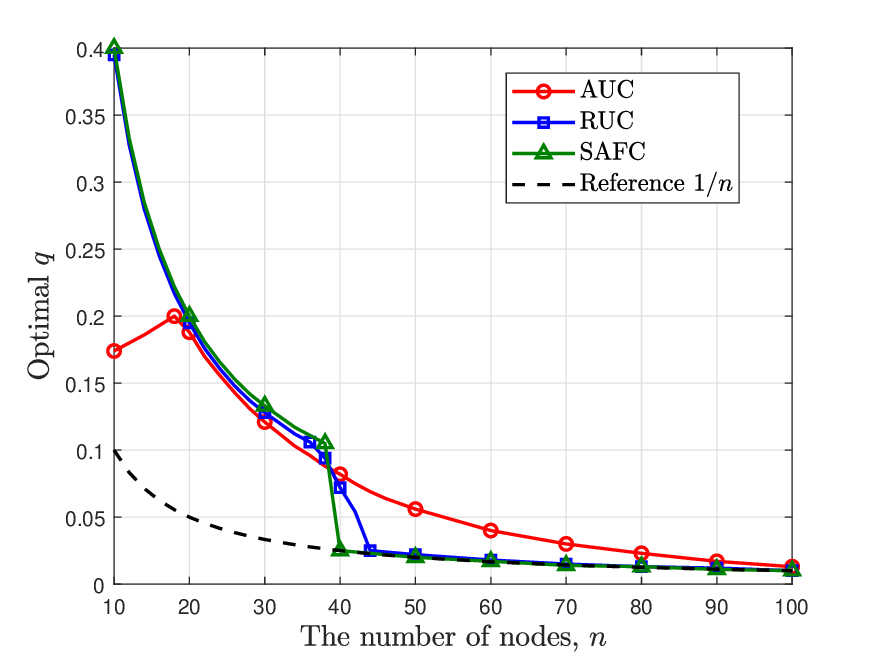}
        \label{fig:change_n_xi0.1_q}
    }
    \subfloat[$\eta$ vs $n$.]{
        \includegraphics[width=0.33\linewidth]{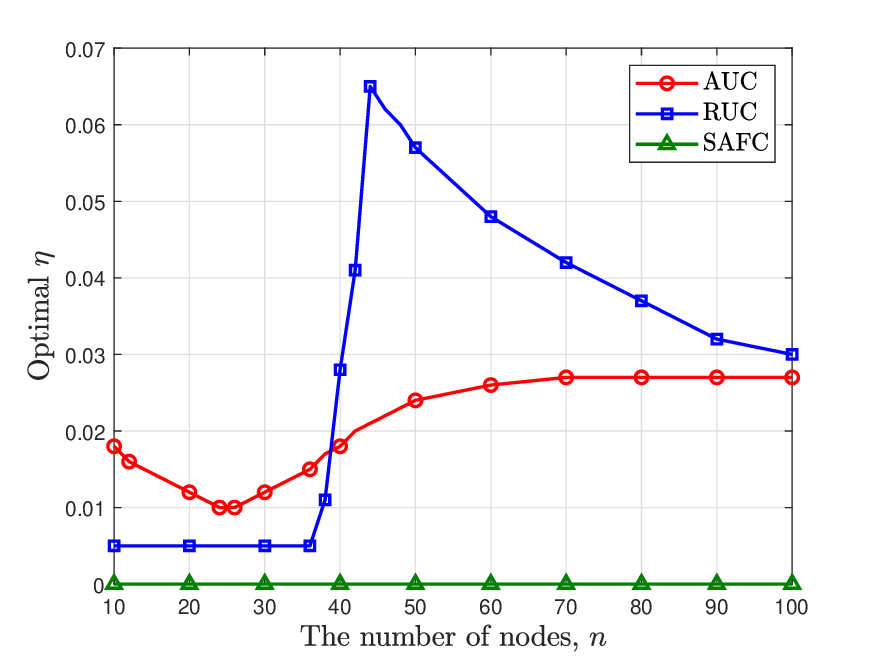}
        \label{fig:change_n_xi0.1_eta}
    }
    \caption{Comparison of network-average AoI under optimal parameters for three situations.} %the rounds per simulation run is $10^4$, the number of simulation runs is $10$.}
    \label{fig:change_n_xi0.1}
    \vspace{-0.5cm}
\end{figure*}

% \begin{figure}[!t]
%     \centering
%     \includegraphics[width=0.85\linewidth]{figure/average_energy_queue_length_xi0.1.eps}
%     \caption{Illustration of resource hoarding via average energy queue accumulation.}
%     \label{fig:average_energy_queue_length}
%     \vspace{-0.5cm}
% \end{figure}
\begin{figure}[!t]
    \centering
    \includegraphics[width=0.8\linewidth]{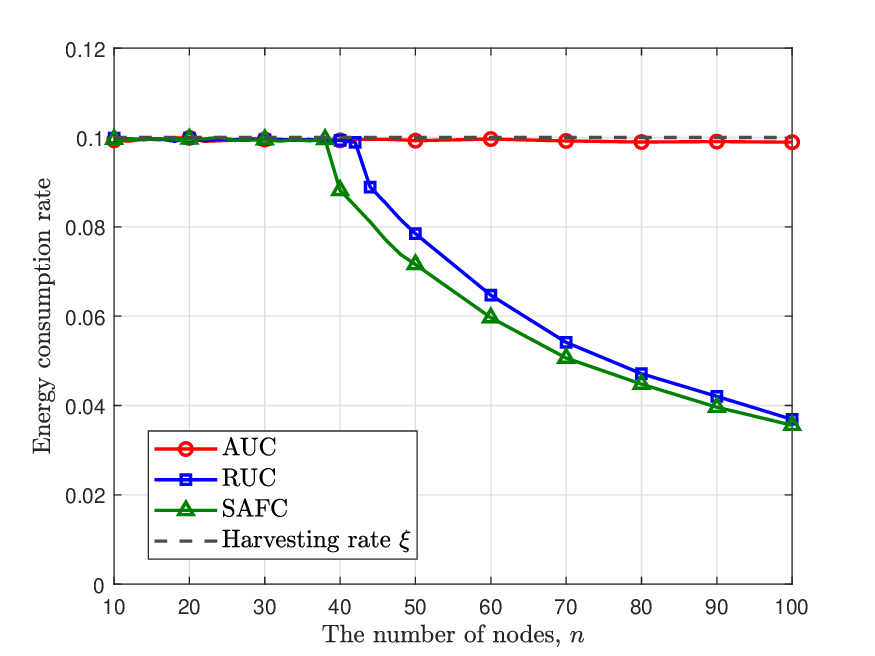}
    \caption{Demonstration of utilization efficiency via energy consumption rate.}
    \label{fig:energy_consumption_rate}
    \vspace{-0.5cm}
\end{figure}

\begin{figure}[!t]
    \centering
    \includegraphics[width=0.8\linewidth]{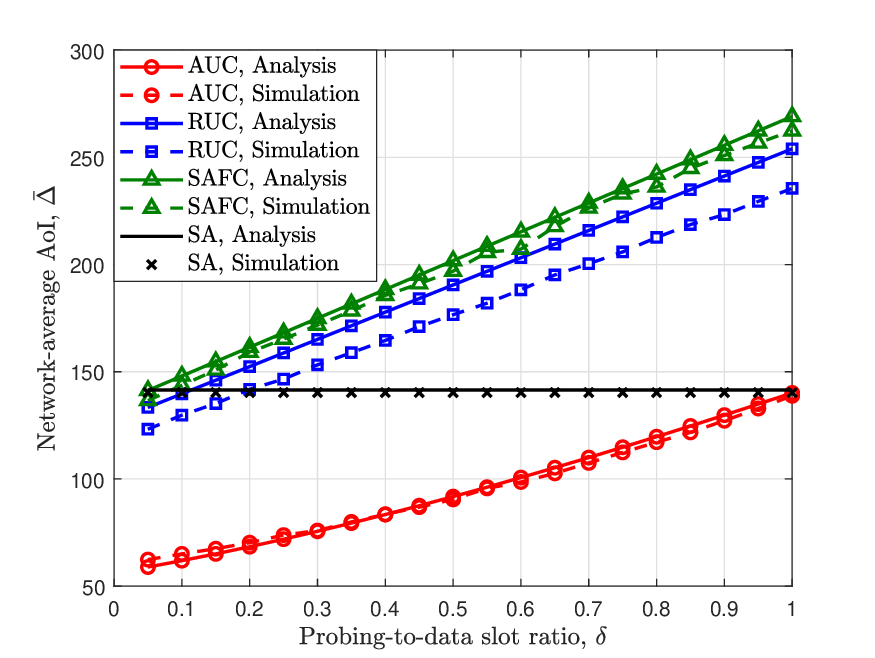}
    \caption{Comparison of simulated and theoretical network-average AoI under optimal parameters. $n=50$.}
    \label{fig:change_ratio}
    \vspace{-0.7cm}
\end{figure}
\section{Conclusion}
This paper analyzed the network-average AoI in EH-enabled reservation networks. The results show that handling reservation failures is decisive for performance, and AUC achieves the best freshness in the energy-constrained regime because avoiding costly data collisions outweighs the probing overhead. We further show that the superiority of reservation over direct transmission is contingent on energy availability. As the energy arrival rate decreases, the penalty of energy wastage caused by data collisions dominates the signaling cost of probing, making reservation-based access the more robust choice for maintaining information freshness.

\bibliography{ref}
\appendices
\section{} \label{app:proof_A}
Based on conditional probability, $p_\mathrm{s}$ is expressed as
\begin{equation}
p_\mathrm{s}=\frac{P_\mathrm{res,s}+P_\mathrm{res,f}^\mathrm{network}p_{\mathrm{c},\mathrm{s}}}{p_\mathrm{T}},
\end{equation}
where $p_{\mathrm{c},\mathrm{s}}$ denotes the contention success probability. The probability of a specific node successfully reserving a slot is
\begin{equation}
    P_\mathrm{res,s}^\mathrm{node} = q p_\mathrm{a} (1 - p_\mathrm{a} q)^{n-1}.
\end{equation}
The network-wide reservation failure probability, representing the event where no node succeeds, is 
\begin{equation}
    P_\mathrm{res,f}^\mathrm{network} = 1 - nP_\mathrm{res,s}^\mathrm{node} = 1 - nqp_\mathrm{a}(1-p_\mathrm{a} q)^{n-1}. 
\end{equation}
Substituting these terms into \eqref{eq:p_s_unified} concludes the proof.

\section{} \label{app:proof_B}
We derive the steady-state probabilities $S_m$ via the global balance equations. 
For the state $m=0$, we have
\begin{equation} \label{eq:balance_0}
    (1-P_{s,i}) S_0 = P_{a,d} S_{M+1}.
\end{equation}
For the state $m=1$, we have
\begin{equation} \label{eq:balance_1}
\begin{split}
    P_{s,h} S_1 &= P_{s,h} S_0 + P_{a,u} S_{M+1} + P_{a,d} S_{M+2}.
\end{split}
\end{equation}
For the state $2 \le m \le M-1$, we have
\begin{equation} \label{eq:balance_2_M-1}
\begin{split}
    P_{s,h} S_m &= P_{s,h} S_{m-1} + P_{a,e} S_{m+M-1}\\
    &\quad+ P_{a,u} S_{m+M} + P_{a,d} S_{m+M+1}.
\end{split}
\end{equation}
For the state $m=M$, we have
\begin{equation} \label{eq:balance_M}
\begin{split}
    P_{s,h} S_M &= P_{s,h} S_{M-1} + P_{a,e} S_{2M-1}\\
    &\quad + P_{a,u} S_{2M} + P_{a,d} S_{2M+1}.
\end{split}
\end{equation}
For the state $m=M+1$, we have
\begin{equation} \label{eq:balance_M+1}
\begin{split}
    (1 - P_{a,i}) S_{M+1} &= P_{s,h} S_M + P_{a,r} S_{M+2} + P_{a,e} S_{2M} \\
    &\quad + P_{a,u} S_{2M+1} + P_{a,d} S_{2M+2}.
\end{split}
\end{equation}
For the state $m \ge M+2$, we have
\begin{equation} \label{eq:balance_active_internal}
\begin{split}
    (1 - P_{a,i}) S_m &= P_{a,h} S_{m-1} + P_{a,r} S_{m+1} \\
    &\hspace{-4em}+ P_{a,e} S_{m+M-1}+ P_{a,u} S_{m+M} + P_{a,d} S_{m+M+1}.
\end{split}
\end{equation}
Based on the equation \eqref{eq:balance_active_internal}, the steady state probability $S_m$, $m\geq M+1$ can be expressed as $S_m=Cz^m, m\geq M+1$,
% \begin{equation}
%     S_m=Cz^m, \qquad m\geq M+1,
% \end{equation}
where $z \in (0,1)$ is the root of 
\begin{equation} \label{eq:z_general}
\begin{split}
   P_{a,d}z^{M+2} &+ P_{a,u} z^{M+1} + P_{a,e} z^{M} + P_{a,r}z^2 \\
   &+ (P_{a,i}-1)z + P_{a,h}= 0.
\end{split}
\end{equation}
Using the equation \eqref{eq:balance_0}, we have
\begin{equation} \label{eq:10}
   S_{M+1}=\tfrac{P_{s,h}}{P_{a,d}}S_0.
\end{equation}
Therefore, for $m\geq M+1$, we can express $S_m$ as
\begin{equation}
    S_m = S_{M+1} z^{m-(M+1)} = \tfrac{P_{s,h}}{P_{a,d}} z^{m-M-1} S_0.
\end{equation}
Using the equation \eqref{eq:balance_1}, we have
\begin{equation}
\begin{split}
     S_1 
     %&= S_0+\frac{P_{a,u}}{P_{s,h}}S_{M+1}+\frac{P_{a,d}}{P_{s,h}}S_{M+2}\\
     %&= S_0+\frac{P_{a,u}}{P_{s,h}} \left( \frac{1-P_{s,h}}{P_{a,d}} S_0 \right) +\frac{P_{a,d}}{P_{s,h}} \left( \frac{1-P_{s,h}}{P_{a,d}} z S_0 \right) \\
     &= S_0 \left( 1 + z + \tfrac{P_{a,u}}{P_{a,d}}\right).
\end{split}
\end{equation}
Using the equation \eqref{eq:balance_2_M-1}, for $2\leq m\leq M-1$, we have
\begin{equation}
\begin{split}
     S_m - S_{m-1} 
     %&= \frac{1}{P_{s,h}} \Big( P_{a,e} S_{m+M-1} + P_{a,u} S_{m+M} \\
     %&\quad + P_{a,d} S_{m+M+1} \Big) \\
     %&= \frac{S_{M+1}}{P_{s,h}} \left( P_{a,e} z^{i-2} + P_{a,u} z^{i-1} + P_{a,d} z^{i} \right) \\
     &= S_0 \left( \tfrac{P_{a,e} + z P_{a,u} + z^2 P_{a,d}}{P_{a,d}}\right)z^{m-2}.
\end{split}
\end{equation}
Summing this recursion leads to
\begin{equation}
\begin{split}
    S_m \!=\!S_1\!+\!S_0 \left( \tfrac{P_{a,e}\!+\!z P_{a,u}\!+\!z^2 P_{a,d}}{P_{a,d}} \right)\!\!\left( \tfrac{1-z^{m-1}}{1-z} \right).
\end{split}
\end{equation}
Substituting the characteristic equation \eqref{eq:10} into \eqref{eq:balance_M+1} yields
\begin{equation}
    \begin{split}
         S_M 
         %&= \frac{1}{P_{s,h}} \Big( (1 - P_{a,i})S_{M+1} - (P_{a,r}S_{M+2} + P_{a,e}S_{2M} \notag \\
         %&\quad + P_{a,u}S_{2M+1} + P_{a,d}S_{2M+2}) \Big) \notag \\
         %&= \frac{S_{M+1}}{P_{s,h}} \frac{P_{a,h}}{z} \notag \\
         &= S_0 \left( \tfrac{P_{a,h}}{z P_{a,d}} \right).
    \end{split}
\end{equation}
The existence of the steady-state distribution requires a root $z \in (0,1)$ for \eqref{eq:z_general}. Therefore, we define the characteristic function $f(z)$ as
\begin{equation}
    \begin{split}
        f(z) &= P_{a,d}z^{M+2} + P_{a,u} z^{M+1} + P_{a,e} z^{M} \\
        &\quad + P_{a,r}z^2 + (P_{a,i}-1)z + P_{a,h}.
    \end{split}
\end{equation}
Observing $f(1) = 0$ and $f(0) = P_{a,h} > 0$, we note that $f(z)$ is strictly convex for $z > 0$ since $f''(z) > 0$. Thus, a unique root exists in $(0,1)$ if and only if $f'(1) > 0$. This yields the condition \eqref{eq:esr_not_condition}.

\section{} \label{app:proof_C}
We derive the moments of the transmission interval by separately characterizing the access waiting time $T_A$ and the energy accumulation time $T_E$.

We begin with $T_A$. Once a node is active, each communication round leads to a transmission attempt with probability $p_\mathrm{T}$, independently across rounds. Therefore, $T_A$ follows a geometric distribution with parameter $p_\mathrm{T}$, i.e.,
\begin{equation}
    \mathbb{P}(T_A=t)=(1-p_\mathrm{T})^{t-1}p_\mathrm{T}, \qquad t=1,2,\dots
\end{equation}
and its first two moments are given by \eqref{eq:TA_moments}.

We next turn to $T_E$. Let $Q$ denote the energy deficit immediately after a transmission attempt, namely, the number of harvested energy units required for the node to become active again. Conditioned on $Q=l$, the node must accumulate $l$ energy units through the Bernoulli EH process with rate $\xi$. Following the same argument as in \cite{zhao2025age}, $T_E|Q=l$ follows a negative binomial distribution:
\begin{equation}
    \mathbb{P}(T_E=t|Q=l)=\binom{t-1}{l-1}\xi^l(1-\xi)^{t-l}, \qquad t=l,l+1,\dots
\end{equation}
Hence, the corresponding conditional first and second moments are
\begin{equation}\label{eq:T_E|Q=l}
    \mathbb{E}[T_E|Q=l]=\frac{l}{\xi},
\end{equation}
and
\begin{equation}\label{eq:T_E^2|Q=l}
    \mathbb{E}[T_E^2|Q=l]=\frac{l(l-\xi+1)}{\xi^2}.
\end{equation}

Applying the law of total expectation to \eqref{eq:T_E|Q=l} and \eqref{eq:T_E^2|Q=l} gives \eqref{eq:E_TE_unified} and \eqref{eq:E_TE2_unified}, once the distribution of $Q$ is specified. We therefore characterize $\mathbb{P}(Q=l)$ as follows.

Consider a node in state $m\ge M+1$ at the beginning of a transmission round. Depending on whether the transmission results in a deep, standard, or economical update, the post-transmission energy state becomes $m-(M+1)$, $m-M$, and $m-(M-1)$, respectively. Since the node becomes active again once its energy level reaches $M+1$, for a given deficit level $l$, the corresponding pre-transmission state satisfies
\begin{equation}
    m=
    \begin{cases}
        2M+2-l, & \text{deep update},\\
        2M+1-l, & \text{standard update},\\
        2M-l,   & \text{economical update}.
    \end{cases}
\end{equation}
Accordingly, all three cases are feasible for $1\le l\le M-1$; for $l=M$, only the first two remain feasible; and for $l=M+1$, only the deep-update case is possible.

Conditioning on the event that the node is active at the beginning of the round, which occurs with probability $p_\mathrm{a}=\sum_{m=M+1}^{\infty}S_m$, $\mathbb{P}(Q=l)$ is obtained by summing the probabilities of all feasible pre-transmission states leading to deficit $l$, weighted by the corresponding conditional outcome probabilities $\Omega_{deep}$, $\Omega_{std}$, and $\Omega_{eco}$. Therefore, $\mathbb{P}(Q=l)$ is given by \eqref{eq:prob_Q}. Substituting \eqref{eq:prob_Q} into \eqref{eq:E_TE_unified} and \eqref{eq:E_TE2_unified} completes the proof.

\end{document}